 \newcommand{\theoremlike}[2]{\par\medskip\penalty-250%
{{\bfseries\noindent
#2 \ref{#1}.}}\it}
\newcommand{\thmhelperpre}[2]{\theoremlike{#1}{#2}}
\newcommand{\thmhelperpost}{\par\medskip}
\newenvironment{reftheorem}[1]{\thmhelperpre{#1}{Theorem}}{\thmhelperpost}
\tikzstyle{max}=[shape=rectangle,draw,inner sep=0pt,minimum size=6mm,thick]
\tikzstyle{ran}=[shape=circle,draw,inner sep=0pt,minimum size=6mm,thick]
\newcommand{\optmdp}[2][]{
  \ifthenelse{\isempty{#1}}%
    {#2}     
    {#2_{#1}}
}
\newcommand{\Nset}{\mathbb{N}}
\newcommand{\extNset}{\overline{\Nset}}
\renewcommand{\vec}[1]{\mathbf{#1}}
\newcommand{\veccomp}[2]{#1(#2)}
\newcommand{\dist}{\mathcal{D}}
\newcommand{\mdp}{\mathcal{M}}
\newcommand{\states}{S}
\newcommand{\actions}{A}
\newcommand{\trans}{\Delta}
\newcommand{\cons}{C}
\newcommand{\reloads}{\mathit{R}}
\newcommand{\Ca}{\mathit{cap}}
\newcommand{\Succ}{\mathit{Succ}}
\newcommand{\target}{T}
\newcommand{\trunc}[2][]{\optmdp[#1]{\llbracket \, #2 \, \rrbracket}}
\newcommand{\strunc}[2][]{\optmdp[#1]{\llfloor \,#2 \,\rrfloor}}
\providecommand{\path}{}
\renewcommand{\path}{\alpha}
\newcommand{\fpath}{\alpha}
\newcommand{\run}{\mathit{\varrho}}
\newcommand{\hist}{\fpath}
\newcommand{\histpr}{\beta}
\newcommand{\histconc}{\odot}
\newcommand{\Runs}[1][]{\optmdp[#1]{\mathsf{Runs}}}
\newcommand{\histories}[1][]{\optmdp[#1]{\mathit{hist}}}
\newcommand{\pref}[1]{_{..#1}}
\newcommand{\suff}[1]{_{#1..}}
\newcommand{\infix}[2]{_{#1..#2}}
\newcommand{\rstate}[2][\run]{#1_{#2}}
\newcommand{\ract}[2][\run]{\mathit{Act}^{#2}(#1)}
\newcommand{\last}[1]{\mathit{last(#1)}}
\newcommand{\compatible}[3][]{\optmdp[#1]{\mathsf{Comp}}(#2,#3)}
\newcommand{\memstruct}{\mu}
\newcommand{\mem}{M}
\newcommand{\nextf}{\mathit{nxt}}
\newcommand{\upf}{\mathit{up}}
\newcommand{\meminit}{m_0}
\newcommand{\upfstep}{\upf^*}
\newcommand{\selector}{\Sigma}
\newcommand{\selrule}{\varphi}
\newcommand{\dom}{\mathit{dom}}
\newcommand{\srules}[1][]{\optmdp[#1]{\mathit{Rules}}}
\newcommand{\SafeRuns}{\mathsf{SafeRuns}}
\newcommand{\ReachRuns}[1][\target]{\mathsf{Reach}_{#1}}
\newcommand{\BuchiRuns}[1][\target]{\mathsf{B\ddot{u}chi}_{#1}}
\newcommand{\Safe}{\mathit{Safe}}
\newcommand{\SafePosReach}[1][\target]{\mathit{SafePR}_{#1}}
\newcommand{\SafeBuchi}[1][\target]{\mathit{SafeB\ddot{u}chi}_{#1}}
\newcommand{\MinInitCons}{\mathit{MinInitCons}}
\newcommand{\pathcons}{\mathit{cons}}
\newcommand{\reachcons}[1]{\mathit{ReachCons}_{#1}}
\newcommand{\minreach}[1]{\mathit{MinReach}_{#1}}
\newcommand{\SPRval}[1][]{\optmdp[#1]{\mathit{SPR\textnormal{-}Val}}}
\newcommand{\calO}{\mathcal{O}}
\newcommand{\suf}[1]{_{#1..}}
\newcommand{\probm}[2]{\mathbb{P}^{#1}_{#2}}
\newcommand{\cf}[1]{\mathcal{C}}  %
\newcommand{\len}[1]{len(#1)}
\newcommand{\safe}{\mathit{Safe}}
\newcommand{\infvec}{\boldsymbol{\infty}}
\newcommand{\relvar}{\mathit{Rel}}
\newcommand{\old}{\mathit{old}}
\newcommand{\varToRemove}{\mathit{ToRemove}}
\newcommand{\enlev}[3][]{\mathit{RL}_{#2}^{#1}(#3)}
\newcommand{\ifApp}[2]%
{\ifthenelse{\isundefined{\showappendix}}{#2}{#1}}
\begin{document}
	
\title{Qualitative Controller Synthesis\\for Consumption Markov Decision Processes\thanks{This work was partially supported by NASA under Early Stage Innovations grant No. 80NSSC19K0209, and by DARPA under grant No. HR001120C0065. Petr Novotný is supported by the Czech Science Foundation grant No. GJ19-15134Y}}

\author{František Blahoudek\inst{1} \and
	Tomáš Brázdil\inst{2} \and
	Petr Novotný\inst{2} \and
	Melkior Ornik\inst{3} \and
	Pranay Thangeda\inst{3} \and
	Ufuk Topcu\inst{1}}

\institute{Dept of Aerospace Engineering, The University of Texas at Austin, USA  \\
	\email{fandikb@gmail.com, utopcu@utexas.edu} \and
	Faculty of Informatics, Masaryk University, Brno, Czech Republic\\ \email{xbrazdil@fi.muni.cz, petr.novotny@fi.muni.cz} \and
	Dept of Aerospace Engineering, University of Illinois at Urbana-Champaign, Urbana, USA\\
	\email{mornik@illinois.edu, pranayt2@illinois.edu}}

\maketitle

\begin{abstract}
Consumption Markov Decision Processes (CMDPs) are probabilistic decision-making models of resource-constrained systems. In a CMDP, the controller possesses a certain amount of a critical resource, such as electric power. Each action of the controller can consume some amount of the resource. Resource replenishment is only possible in special \emph{reload states,} in which the resource level can be reloaded up to the full capacity of the system. The task of the controller is to prevent resource exhaustion, i.e. ensure that the available amount of the resource stays non-negative, while ensuring an additional linear-time property. We study the complexity of strategy synthesis in consumption MDPs with almost-sure Büchi objectives. We show that the problem can be solved in polynomial time. We implement our algorithm and show that it can efficiently solve CMDPs modelling real-world scenarios.
\end{abstract}


\section{Introduction}

In the context of formal methods, controller synthesis typically boils down to computing a strategy in an \emph{agent-environment} model, a nondeterministic state-transition model where some of the nondeterministic choices are resolved by the controller and some by an uncontrollable environment. Such models are typically either two-player graph games with an adversarial environment or Markov decision process (MDPs); the latter case being apt for modelling statistically predictable environments. In this paper, we consider controller synthesis for \emph{resource-constrained MDPs}, where the computed controller must ensure, in addition to satisfying some linear-time property, that the system's operation is not compromised by a lack of necessary resources.

\paragraph{Resource-Constrained Probabilistic Systems.} \emph{Resource-con\-strained} systems need a supply of some resource (e.g. power) for steady operation: the interruption of the supply can lead to undesirable consequences and has to be avoided. For instance, an autonomous system, e.g. an autonomous electric vehicle (\emph{AEV}), is not able to draw power directly from an endless source. Instead, it has to rely on an internal storage of the resource, e.g. a battery, which has to be replenished in regular intervals to prevent resource exhaustion. 
Practical examples of AEVs include driverless cars, drones, or planetary rovers~\cite{balaram2018mars}. In these domains, resource failures may cause a costly mission failure and even safety risks. Moreover, the operation of autonomous systems is subject to probabilistic uncertainty~\cite{SB:book}. Hence, in this paper, we study the resource-constrained strategy synthesis problem for MDPs.

\paragraph{Models of Resource-Constrained Systems \& Limitations of Current Approaches.} There is a substantial body of work on verification of resource-constrained systems~\cite{CdAHS:resource-interfaces,BFLMS:weak-upper-bound,BHR:battery,BBFLMR:energy-controller-synthesis,WHLK:energy-aware-scheduling,SSDNLB:energy-validation,FL:featured-weighted-automata,FZ:cost-parity-games,BDKL:energy-utility-probabilistic-mc,BDDKK:energy-utility-quantiles}. The typical approach is to model them as finite-state systems augmented with an integer-valued counter representing the current \emph{resource level,} i.e. the amount of the resource present in the internal storage. The resource constraint requires that the resource level never drops below zero.\footnote{In some literature, the level is required to stay positive as opposed to non-negative, but this is only a matter of definition: both approaches are equivalent.} In the well-known \emph{energy} model~\cite{CdAHS:resource-interfaces,BFLMS:weak-upper-bound}, each transition is labelled by an integer, and performing an $ \ell $-labelled transition results in $ \ell $ being added to the counter. Thus, negative numbers stand for resource consumption while positive ones represent re-charging by the respective amount. Many variants of both MDP and game-based energy models were studied, as detailed in the related work. In particular,~\cite{CHD:energy-MDPs} considers controller synthesis for energy MDPs with qualitative B\"uchi and parity objectives. The main limitation of energy-based agent-environment models is that in general, they are not known to admit polynomial-time controller synthesis algorithms. Indeed, already the simplest problem, deciding whether a non-negative energy can be maintained in a two-player energy game, is at least as hard as solving mean-payoff graph games~\cite{BFLMS:weak-upper-bound}; the complexity of the latter being a well-known open problem~\cite{Jurdzinski:parity-to-mp}. This hardness translates also to MDPs~\cite{CHD:energy-MDPs}, making polynomial-time controller synthesis for energy MDPs impossible without a theoretical breakthrough.

\emph{Consumption models,} introduced in~\cite{BCKN:consumption-games}, offer an alternative to energy models. In a consumption model, a non-negative integer, $ \Ca $, represents the maximal amount of the resource the system can hold, e.g. the battery capacity. Each transition is labelled by a non-negative number representing the amount of the resource \emph{consumed} when taking the transition (i.e., taking an $ \ell $-labelled transition decreases the resource level by $ \ell $). The resource replenishment is different from the energy approach. The consumption approach relies on the fact that reloads are often \emph{atomic events}, e.g. an AEV plugging into a charging station and waiting to finish the charging cycle. Hence, some states in the consumption model are designated as \emph{reload states,} and whenever the system visits a reload state, the resource level is replenished to the full capacity $ \Ca $. Modelling reloads as atomic events is natural and even advantageous: consumption models typically admit more efficient analysis than energy models~\cite{BCKN:consumption-games,Klaska:BT}. 
However, consumption models have not yet been considered in the probabilistic setting.

\paragraph{Our Contribution.} We study strategy synthesis in consumption MDPs with B\"uchi objectives. Our main theoretical result is stated in the following theorem.

\begin{theorem}
\label{thm:intro-main}
Given a consumption MDP $ \mdp $ with a capacity $ \Ca $, an initial resource level $ 0\leq d \leq \Ca$, and a set $ T $ of accepting states, we can decide, in polynomial time, whether there exists a strategy $ \sigma $ such that when playing according to $ \sigma $, the following \emph{consumption-B\"uchi objectives} are satisfied:
\begin{itemize}
\item Starting with resource level $ d $, the resource level never\footnote{In our model, this is equivalent to requiring that with probability 1, the resource level never drops below $ 0 $.} drops below $ 0 $.
\item With probability $ 1 $, the system visits some state in $ T $ infinitely often.
\end{itemize}
Moreover, if such a strategy exists then we can compute, in polynomial time, its poly\-no\-mial-size representation.
\end{theorem}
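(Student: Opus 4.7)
The plan is to decouple the consumption-safety constraint from the almost-sure B\"uchi requirement, computing two auxiliary value functions over the state space and then combining them in an outer refinement loop that mirrors the classical ``good end component'' computation for MDPs.

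First, I would compute, for every state $s$, the minimum resource level $\safe(s)$ from which there exists a strategy that avoids resource exhaustion forever. This value is characterised as the least fixed point of a monotone Bellman-style operator: at a non-reload state one takes the minimum over actions $a$ of $\cons(s,a)$ plus the worst-case successor value, while at a reload state the value is capped at $\Ca$ provided the state can be re-entered safely. A Bellman--Ford-style iteration (in the spirit of Br\'azdil et al.\ on consumption games) should converge in $O(|S|)$ rounds despite $\Ca$ being encoded in binary, because each round strictly shrinks the set of states whose value is still beyond what is achievable. Analogously, I would compute $\SafePosReach(s)$, the minimum level from which one can safely reach $\target$ with positive probability and still retain a safe-forever continuation.

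Using these primitives, I would iteratively refine a candidate ``winning'' set $W$. Initially set $W$ to the states with $\safe(s)\leq \Ca$; then repeatedly recompute $\safe$ and $\SafePosReach$ (with target restricted to $\target\cap W$) inside the sub-MDP induced by $W$, and delete any state whose values exceed $\Ca$. Every iteration is polynomial and removes at least one state, so the loop terminates after at most $|S|$ rounds. The final set $W$ should be precisely the set of states from which the consumption-B\"uchi objective is achievable: from any $s\in W$ with resource at least $\safe(s)$, a strategy of the form ``safely reach $\target$ with positive probability, then safely return to a reload, refill, and repeat'' visits $\target$ infinitely often almost surely by a standard Borel--Cantelli argument, while conversely any winning strategy witnesses non-exclusion from $W$. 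The witnessing strategy is representable in polynomial space: for each state in $W$ one stores a Bellman-optimal action, together with the current resource level in $O(\log \Ca)$ bits and a finite selector indicating which sub-task (reach $\target$, or return to a reload) is currently in progress.

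The main obstacle I anticipate is controlling the complexity of the fixed-point computations in the presence of a binary-encoded capacity. A product construction tracking resource levels explicitly would give only a pseudo-polynomial algorithm. The technical core of the proof therefore lies in showing that the monotone operators defining $\safe$ and $\SafePosReach$ admit value iteration whose number of useful updates is bounded by a polynomial in $|S|$ rather than in $\Ca$, and that this bound survives the outer B\"uchi refinement loop; the key should be a structural argument that each round reduces either the set of ``not-yet-finite'' states or the set of states whose value still changes, each of which can happen at most $|S|$ times. Once this is in hand, polynomial-time synthesis and the polynomial-size strategy representation follow from the constructions above.
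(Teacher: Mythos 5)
Your overall architecture matches the paper's: compute the minimal safe levels $\Safe$, then the minimal levels $\SafePosReach$ for safe positive reachability of $T$, then run an outer refinement loop that discards reload states whose positive-reachability value exceeds $\Ca$, and finally argue that restarting the positive-reachability strategy at each reload visit yields almost-sure B\"uchi (the paper uses exactly your argument: each attempt succeeds with probability at least some $\delta>0$, and attempts recur infinitely often because the CMDP is decreasing and safe runs must therefore keep hitting reload states). One cosmetic difference: the paper's refinement only demotes bad \emph{reload} states to non-reload status rather than deleting states from the MDP, which keeps the sub-MDP well-formed; your deletion variant would need a little care but is the same idea.

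The genuine gap is in what you yourself flag as the technical core: the polynomial bound on the number of value-iteration rounds for $\SafePosReach$. Your proposed mechanism --- ``each round reduces either the set of not-yet-finite states or the set of states whose value still changes, each of which can happen at most $|S|$ times'' --- is not a valid argument for this operator. The set of states whose value changes in round $i+1$ is contained in the \emph{predecessors} of those that changed in round $i$, not in that set itself, so it need not shrink monotonically; and a single state's finite value can strictly decrease in many non-consecutive rounds as cheaper-but-longer witnesses propagate. Since values are integers up to $\Ca$, without a better argument you only get a pseudo-polynomial bound. The paper's argument (its Lemma 4) is structural on witness \emph{paths} rather than on the iteration: an optimal witness history for safe positive reachability never needs to visit the same reload state twice (the level resets to $\Ca$ anyway) and never needs to traverse a cycle between two consecutive reload visits (consumption is non-negative, so cycles only hurt), hence a witness of length $K=|\reloads|+(|\reloads|+1)(|\states|-|\reloads|+1)$ suffices and the iteration stabilizes after $K$ rounds. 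A second, smaller issue is the strategy representation: ``one Bellman-optimal action per state plus a mode bit for the current sub-task'' is insufficient, because within the reach-$T$ mode the correct action at a given state genuinely depends on the current resource level (gamble toward $T$ when rich, retreat toward a reload when poor). The paper resolves this with counter selectors: for each state, a partial map from polynomially many level thresholds to actions, one threshold being added per round of the (polynomially bounded) iteration. Your mode-switching representation could likely be repaired, but the state-dependent switching thresholds are precisely the data you would have to make explicit and count.
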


For the sake of clarity, we restrict to proving \Cref{thm:intro-main} for a natural sub-class of MDPs called \emph{decreasing consumption MDPs,} where there are no cycles of zero consumption. The restriction is natural (since in typical resource-constrained systems, each action -- even idling -- consumes some energy, so zero cycles are unlikely) and greatly simplifies presentation. In addition to the theoretical analysis, we implemented the algorithm behind~\Cref{thm:intro-main} and evaluated it on several benchmarks, including a realistic model of an AEV navigating the streets of Manhattan. The experiments show that our algorithm is able to efficiently solve large CMDPs, offering a good scalability. 

\paragraph{Significance.}
Some comments on \Cref{thm:intro-main} are in order. First, all the numbers in the MDP, and in particular the capacity $ \Ca $, are encoded in binary. Hence, ``polynomial time'' means time polynomial in the encoding size of the MDP itself and in $ \log(\Ca) $. In particular, a naive ``unfolding'' of the MDP, i.e. encoding the resource levels between $ 0 $ and $ \Ca $ into the states, does not yield a polynomial-time algorithm, but an exponential-time one, since the unfolded MDP has size proportional to $ \Ca $. We employ a value-iteration-like algorithm to compute minimal energy levels with which one can achieve the consumption-B\"uchi objectives.

A similar concern applies to the ``polynomial-size representation'' of the strategy $ \sigma $. To satisfy a consumption-B\"uchi objective, $ \sigma $ generally needs to keep track of the current resource level. Hence, under the standard notion of a finite-memory (FM) strategy (which views FM strategies as transducers), $ \sigma $ would require memory proportional to $ \Ca $, i.e. a memory exponentially large w.r.t. size of the input. However, we show that for each state $ s $ we can partition the integer interval $[0,\ldots,\Ca]  $ into polynomially many sub-intervals $ I_1^s,\ldots,I_k^s $ such that, for each $ 1\leq j \leq k $, the strategy $ \sigma $ picks the same action whenever the current state is $ s $ and the current resource level is in $ I_j^s $. As such, the endpoints of the intervals are the only extra knowledge required to represent $ \sigma $, a representation which we call a \emph{counter selector}. 
We instrument our main algorithm so as to compute, in polynomial time, a polynomial-size counter selector representing the witness strategy $ \sigma $.

Finally, we consider linear-time properties encoded by B\"uchi objectives over the states of the MDP. In essence, we assume that the translation of the specification to the B\"uchi automaton and its product with the original MDP model of the system were already performed. Probabilistic analysis typically requires the use of deterministic B\"uchi automata, which cannot express all linear-time properties. However, in this paper we consider qualitative analysis, which can be performed using restricted versions of non-deterministic B\"uchi automata that are still powerful enough to express all $ \omega $-regular languages. Examples of such automata are limit-deterministic B\"uchi automata~\cite{SEJK:ldba} or good-for-MDPs automata~\cite{HPSTSW:goodformdps}. Alternatively, consumption MDPs with parity objectives could be reduced to consumption-B\"uchi MPDs using the standard parity-to-B\"uchi MDP construction~\cite{Chaterjee:thesis,Alfaro:thesis,CY95,CHJH:quant-parity}. We abstract from these aspects and focus on the technical core of our problem, solving consumption-B\"uchi MDPs.

Consequently, to our best knowledge, we present the first polynomial-time algorithm for controller synthesis in resource-constrained MDPs with $ \omega $-regular objectives.


\paragraph{Related Work.} There is an enormous body of work on energy models. Stemming from the models introduced in~\cite{CdAHS:resource-interfaces,BFLMS:weak-upper-bound}, the subsequent work covered energy games with various combinations of objectives~\cite{CD:energy-parity-journal,BMRLL:average-energy-games,LLZ:limit-consumption,BHMRZ:bounding-average-energy-games,BHRR:energy-mp-games,BCDGR:energy-games,BCR:energy-MP-timed-games,BFLM:timed-observers}, energy games with multiple resource types~\cite{FLLS11:EnGames,JLR:multiweighted-energy,CRR14,VCDHRR:multi-mp-energy,JLS:fixed-dim-energy,CHaloupka13,BJK:eVASS-games,CHDHR:multi-energy-mean-payoff} or the variants of the above in the MDP~\cite{BKN:energy-mp-ATVA,MSTW:energy-parity-MDPs}, infinite-state~\cite{AAHMKT14:infinite-state-energy-games}, or partially observable~\cite{DDGRT10:PO-energy-MP} settings. As argued previously, the controller synthesis within these models is at least as hard as solving mean-payoff games. The paper~\cite{CHKN:energy-games-polynomial} presents polynomial-time algorithms for non-stochastic energy games with special weight structures. Recently, an abstract algebraic perspective on energy models was presented in~\cite{CFL:algebraic-realtime-energy,EFLQ:algebraic-energy-I,EFLQ:algebraic-energy-II}.

Consumption systems were introduced in~\cite{BCKN:consumption-games} in the form of consumption games with multiple resource types. Minimizing mean-payoff in automata with consumption constraints was studied in~\cite{BKKN:consumption-payoff}. 

Our main result requires, as a technical sub-component, solving the \emph{resource-safety} (or just \emph{safety}) problem in consumption MDPs, i.e. computing a strategy which prevents resource exhaustion. The solution to this problem consists (in principle) of a Turing reduction to the problem of minimum cost reachability in two-player games with non-negative costs. The latter problem was studied in~\cite{KBBEGRZ:short-path-interdiction}, with an extension to arbitrary costs considered in~\cite{BGHM:shortest-path-games} (see also~\cite{FGR:quantitative-languages}). We present our own, conceptually simple, value-iteration-like algorithm for the problem, which is also used in our implementation.

Elements of resource-constrained optimization and minimum-cost reachability are also present in the line of work concerning \emph{energy-utility quantiles} in MDPs~\cite{BDDKK:energy-utility-quantiles,BDKL:energy-utility-probabilistic-mc,BDKKW:PMC-energy-utility-analysis,BCDKK:energy-utility-probabilistic,HBFKN:parameter-energy-synthesis}. In this setting, there is no reloading in the consumption- or energy-model sense, and the task is typically to minimize the total amount of the resource consumed while maximizing the probability that some other objective is satisfied.

\paragraph{Paper Organization \& Outline of Techniques} After the preliminaries~(\Cref{sec:prelims}), we present counter selectors in~\Cref{sec:counter}. The next three sections contain the three main steps of our analysis. In~\Cref{sec:safety}, we solve the safety problem in consumption MDPs. The technical core of our approach is presented in~\Cref{sec:posreach}, where we solve the problem of \emph{safe positive reachability}: finding a resource-safe strategy which ensures that the set $ T $ of accepting states is visited with positive probability. Solving consumption-B\"uchi MDPs then, in principle, consists of repeatedly applying a strategy for safe positive reachability of $ T $, ensuring that the strategy is ``re-started'' whenever the attempt to reach $ T $ fails. Details are given in~\Cref{sec:buchi}. Finally,~\Cref{sec:exp} presents our experiments. Due to space constraints, most technical proofs were moved to the appendix.

\section{Preliminaries}
\label{sec:prelims}

We denote by $\Nset$ the set of all non-negative integers and by
$\extNset$ the set $\Nset \cup \{\infty\}$. Given a set $I$ and a
vector $\vec{v}\in\extNset^{I}$ of integers indexed by $I$, we use
$\veccomp{\vec{v}}{i}$ to denote the $i$-component of $\vec{v}$. We
assume familiarity with basic notions of probability theory. In
particular, a \emph{probability distribution} on an at most countable
set $X$ is a function $f\colon X \rightarrow [0,1]$ s.t. $\sum_{x\in
X} f(x) = 1$. We use $\dist(X)$ to denote the set of all probability
distributions on $X$.

\begin{definition}[CMDP]
A \emph{consumption Markov decision process} (CMDP) is a tuple %
$\mdp = (\states, \actions, \trans, \cons, \reloads, \Ca)$ where
$\states$ is a finite set of  \emph{states}, %
$\actions$ is a finite set of \emph{actions}, %
$\trans\colon \states\times \actions \rightarrow \dist(\states)$ is a
total \emph{transition function}, %
$\cons\colon \states \times \actions \rightarrow \Nset$ is a total
\emph{consumption function}, %
$\reloads\subseteq \states$ is a set of \emph{reload states} where the
resource can be reloaded, and %
$ \Ca $  is a \emph{resource capacity}.%
\end{definition}
\noindent%
Given a set $\reloads'\subseteq \states$, we denote by $\mdp(R')$ the
CMDP obtained from $\mdp$ by changing the set of reloads to $R'$.
%
%
Given $s\in \states$ and $a\in \actions$, we denote by $\Succ(s,a)$
the set $\{t\mid \trans(s,a)(t)>0\}$. A \emph{path} is a (finite or
infinite) state-action sequence $\path=s_1a_1s_2a_2s_3\dots\in
(\states\times \actions)^\omega \cup (\states\cdot
\actions)^*\cdot\states$ such that $s_{i+1}\in \Succ(s_i,a_i)$ for all
$i$. We define $\rstate[\path]{i}=s_i$ and $\ract[\path]{i}=a_i$. We
use $\path\pref{i}$ for the finite prefix $s_1 a_1 s_2\dots s_i$ of
$\path$, we use $ \path\suff{i} $ for the suffix $ s_i a_i
s_{i+1}\dots $, and $\path\infix{i}{j}$ for the infix $s_ia_i\ldots
s_j$. The \emph{length}  of a path $\path$ is the number $\len{\path}$
of actions on $\path$ ($\len{\path}=\infty$ if $\path$ is infinite).

A finite path $ \fpath $ is \emph{simple} if no state appears more than once on $ \fpath $. A finite path is a \emph{cycle} if it starts and ends in the same state. A CMDP is \emph{decreasing} if for every simple cycle $ s_1 a_1s_2 \ldots a_{k-1}s_k $ there exists $ 1 \leq i < k $ such that  $ \cons(s_i,a_i) >0 $. Throughout this paper we consider only decreasing CMDPs. The only place where this assumption is used are the proofs of~\Cref{thm:safety-main} and  \Cref{thm:buchi}.

An infinite path is called a \emph{run}. We typically name runs by variants
of the symbol $\run$. The set of all runs in $\mdp$ is denoted $ \Runs[\mdp]
$ or simply $\Runs$ if $\mdp$ is clear from context. A finite path is
also called \emph{history}. The set of all possible histories of
$\mdp$ is $\histories[\mdp]$ or simply $\histories$. We denote by
$\last{\hist}$ the last state of a history $\hist$. Let $\hist$ be a
history with $\last{\hist}=s_1$ and $\histpr=s_1a_1s_2a_2\ldots$; we
define a \emph{joint path} as $\hist\histconc\histpr = \hist
a_1s_2a_2\ldots$.

A \emph{strategy} for $ \mdp $ is a function $ \sigma \colon
\histories[\mdp] \rightarrow \actions$ assigning to each history an
action to play. A strategy is \emph{memoryless} if $ \sigma(\hist) =
\sigma(\histpr) $ whenever $ \last{\hist}=\last{\histpr} $, i.e., when
the decision depends only on the current state. We do not consider randomized strategies in this paper, as they are non-necessary for qualitative
$\omega$-regular objectives on finite MDPs~\cite{Alfaro:thesis,CY95,CHJH:quant-parity}.

A computation of $\mdp$ under the control of a given
strategy $\sigma$ from some initial state $s\in\states$ creates a
path. The path starts with $s_1=s$. Assume that the current path is
$\path$ and let $s_i=\last{\path}$ (we say that $\mdp$ is currently in
$s_i$). Then the next action on the path is $a_{i}=\sigma(\alpha)$ and
the next state $s_{i+1}$ is chosen randomly according to $\trans(s_i,
a_{i})$. Repeating this process \emph{ad infinitum} yields an infinite
sample run $\run$. We say that a $\run$ is \emph{$\sigma$-compatible} if it can be produced using this process,
and \emph{$s$-initiated} if it starts in $ s $. We denote the set of all $ \sigma $-compatible $ s $-initiated runs by
$\compatible[\mdp]{\sigma}{s}$. 

We denote by $\probm{\sigma}{\mdp,s}(\mathsf{A})$ the probability that
a sample run from $\compatible[\mdp]{\sigma}{s}$ belongs to a given
measurable set of runs $ \mathsf{A} $ (the subscript $ \mdp $ is
dropped when $ \mdp $ is known from the context). For details on the 
formal construction of measurable sets of runs as well as the
probability measure $\probm{\sigma}{\mdp,s}$ see~\cite{Ash:book}.



\subsection{Resource: Consumption, Levels, and Objectives}%
We denote by $ \Ca(\mdp) $ the battery capacity in the MDP $\mdp$.
A resource is consumed along paths and can be reloaded in the reload states up to the full capacity. For a path $\path=s_1a_1s_2\ldots$ we define the consumption of $\path$ as $\pathcons(\path) = \sum_{i=1}^{\len{\alpha}}\cons(s_i, a_i)$ (since the consumption is non-negative, the sum is always well defined, though possibly diverging). Note that $ \pathcons $ does not consider reload states at all. To accurately track the remaining amount of the resource, we use the concept of a \emph{resource level}.

%

\begin{definition}[Resource level]\label{def:energy-level}%
Let $\mdp$ be a CMDP with a set of reload states $\reloads$, let $\hist$ be a
history, and let $0\leq d \leq \Ca(\mdp)$ be an integer called
\emph{initial load}. Then the \emph{energy level after $\hist$
initialized by $d$}, denoted by $ \enlev[\mdp]{d}{\hist} $ or simply
as $\enlev{d}{\hist}$, is defined inductively as follows: for a zero-length
history $s$ we have $ \enlev[\mdp]{d}{s} = d $. For a non-zero-length history $\hist = \histpr a t$ we denote $ c
= \cons({\last{\histpr}},{a}) $, and put%
\[
\enlev[\mdp]{d}{\hist} =
\begin{cases}
  \enlev[\mdp]{d}{\histpr} - c &
    \text{if } \last{\histpr} \not\in\reloads \text{ and }%
    c \leq \enlev[\mdp]{d}{\histpr} \neq \bot\\
  \Ca(\mdp) - c &
    \text{if } \last{\histpr}\in \reloads \text{ and }%
    c\leq \Ca(\mdp) \text{ and }%
    \enlev[\mdp]{d}{\histpr} \neq \bot\\
  \bot & \text{otherwise}
\end{cases}
\]
\end{definition}

Let $\hist$ be a history and let $f,l \geq 0$ that are the minimal and
maximal indices $i$ such that $\rstate[\hist]{i}\in\reloads$,
respectively. Following the inductive definition of $\enlev{d}{\hist}$
it is easy to see that if we have $\enlev{d}{\hist} \neq \bot$, then $\enlev{d}{\hist\pref{i}} = d - \pathcons(\hist\pref{i})$ holds
for all $i \leq f$ and $\enlev{d}{\hist} = \Ca(\mdp) -
\pathcons(\hist\suf{l})$. Further, for each history $\hist$ and $d$
such that $e=\enlev{d}{\hist}\neq\bot$, and each history $\histpr$
suitable for joining with $\hist$ it holds that
$\enlev{d}{\hist\histconc\histpr} = \enlev{e}{\histpr}$.

A run $\run$ is \emph{$d$-safe} if and only if the energy level
initialized by $d$ is a non-negative number for each finite prefix of $\rho$, i.e. if
for all $i > 0$ we have $\enlev{d}{\run\pref{i}} \neq \bot$. We say
that a run is safe if it is $\Ca(\mdp)$-safe. The next lemma follows immediately from the definition of an energy level.

\begin{lemma}
\label{lem:safe-path-extension}
Let $\run=s_1a_1s_2\ldots$ be a $d$-safe run for some $d$ and let $\hist$ be a history such that $\last{\hist}=s_1$. Then the run $\hist\histconc\run$ is $e$-safe if $\enlev{e}{\hist}\geq d$.
\end{lemma}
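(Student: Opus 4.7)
The plan is to combine the substitution identity $\enlev{e}{\hist\histconc\histpr}=\enlev{\enlev{e}{\hist}}{\histpr}$ noted immediately before the lemma with a short monotonicity argument on the run $\run$.

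First I would handle the prefixes of $\hist\histconc\run$ that lie entirely inside $\hist$. From the inductive \Cref{def:energy-level} it is immediate that once $\enlev{e}{\cdot}$ becomes $\bot$ on some prefix, it remains $\bot$ on every extension. Since by assumption $\enlev{e}{\hist}\geq d\geq 0$, in particular $\enlev{e}{\hist}\neq\bot$, and therefore $\enlev{e}{\hist\pref{i}}\neq\bot$ for every $i\leq\len{\hist}$.

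Next I would treat the prefixes that extend into $\run$. Writing $e'=\enlev{e}{\hist}$, the identity recalled right after \Cref{def:energy-level} gives $\enlev{e}{\hist\histconc\run\pref{i}}=\enlev{e'}{\run\pref{i}}$ for every $i$. So it suffices to prove the following monotonicity claim: if $e'\geq d$ and $\run$ is $d$-safe, then $\run$ is also $e'$-safe. I would show by induction on $i$ that $\enlev{e'}{\run\pref{i}}\geq \enlev{d}{\run\pref{i}}$. The base case is $e'\geq d$. For the induction step I split on whether the state $\rstate[\run]{i-1}$ is a reload state. If it is, then \Cref{def:energy-level} gives $\enlev{e'}{\run\pref{i}}=\Ca(\mdp)-\cons(\rstate[\run]{i-1},\ract[\run]{i-1})=\enlev{d}{\run\pref{i}}$ (both sides are well defined, by the induction hypothesis and the $d$-safety of $\run$). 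If it is not, then both values are obtained by subtracting the same consumption from the previous level, so the inequality is preserved. In either case, since $\enlev{d}{\run\pref{i}}\geq 0$ by $d$-safety, we conclude $\enlev{e'}{\run\pref{i}}\geq 0$, hence $\neq\bot$.

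Combining the two parts, every finite prefix of $\hist\histconc\run$ has non-$\bot$ energy level initialized by $e$, so $\hist\histconc\run$ is $e$-safe. I do not expect a genuine obstacle here; the only subtlety is bookkeeping the case distinction at reload states in the monotonicity induction, which is what makes the argument go through cleanly despite the $\Ca(\mdp)$-cap.
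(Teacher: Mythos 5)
Your proof is correct: the paper offers no explicit argument for this lemma (it is asserted to ``follow immediately from the definition of an energy level''), and your combination of $\bot$-propagation on prefixes of $\hist$, the join identity $\enlev{e}{\hist\histconc\run\pref{i}}=\enlev{\enlev{e}{\hist}}{\run\pref{i}}$, and the monotonicity induction with the case split at reload states is precisely the argument being left implicit. No gaps; the induction correctly establishes that each case of \Cref{def:energy-level} is applicable for the larger initial load whenever it is for the smaller one.
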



\paragraph{Objectives} An \emph{objective} is a set of runs. The objective $\SafeRuns(d)$ contains exactly
$d$-safe runs. Given a \emph{target set} $\target\subseteq \states$ and $i \in
\Nset$, we define $\ReachRuns^i = \{ \run \in \Runs \mid \run_j \in
\target \text{ for some } 1 \leq j \leq i + 1 \}$ to be the set of all runs
that reach some state from $\target$ within the first $i$ steps. We put
$\ReachRuns = \bigcup_{i \in \Nset} \ReachRuns^i$. Finally,
the set $\BuchiRuns = \{\run \in \Runs \mid \rstate{i} \in \target \text{
for infinitely many }i\in \Nset\}$.

\paragraph{Problems} We solve three main qualitative problems for CMDPs,
namely \emph{safety}, \emph{positive reachability}, and \emph{Büchi}.

Let us fix a state $s$ and a target set of states $ \target $. We say that a strategy is
\emph{$d$-safe in $s$} if $ \compatible{\sigma}{s} \subseteq
\SafeRuns(d)$. We say that $\sigma$ is \emph{$\target$-positive $d$-safe in
$s$} if it is $d$-safe in $s$ and $\probm{\sigma}{s}(\ReachRuns) >
0$, which means that there exists a run in $\compatible{\sigma}{s}$
that visits $\target$. Finally, we say that $\sigma$ is \emph{$\target$-Büchi
$d$-safe in a state $s$} if it is $d$-safe in $s$ and $
\probm{\sigma}{s}(\BuchiRuns) = 1$.

The vectors $ \Safe $, $ \SafePosReach $ (PR for ``positive
reachability''), and $ \SafeBuchi $ of type $\extNset^\states$
contain, for each $s\in \states$, the minimal $d$ such that there
exists a strategy that is $d$-safe in $s$, $\target$-positive $d$-safe in
$s$, and $\target$-Büchi $d$-safe in $s$, respectively, and $\infty$ if no
such strategy exists.

The problems we consider for a given CMDP are:
\begin{itemize}
\item \emph{Safety:} compute the vector $ \Safe $ and a strategy that
is $ \Safe(s) $-safe in every $ s \in \states $. \item \emph{Positive
reachability:} compute the vector $ \SafePosReach $ and a strategy
that is $ \target $-positive $ \SafePosReach(s) $-safe in every state $ s $.
\item \emph{Büchi:} compute $ \SafeBuchi $ and a strategy
that is $ \target $-Büchi $ \SafeBuchi(s) $-safe in every state $ s $.
\end{itemize}

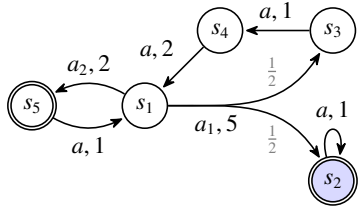
\begin{figure}[hbt]
\centering
\tikzstyle{target}=[fill=blue!15]
\begin{tikzpicture}[automaton]
\node[state] (start)    at (0,0)  {$s_1$};
\node[state,accepting,target] (t)   at (2.5,-1) {$s_2$};
\node[state,accepting] (rel)   at (-1.5,0) {$s_5$};
\node[state] (mis)      at (2.5,1)  {$s_3$};
\node[state] (pathhome) at (1,1)  {$s_4$};

\path[->,auto,swap]
(start) edge[bend right] node[] {$a_2, 2$} (rel)
(rel) edge[bend right] node[] {$a, 1$} (start)
(start) edge[out=0, in=120,looseness=1.1] node[pos=.25] {$a_1, 5$} node[prob,above,pos=.6] {$\frac{1}{2}$}(t)
(start) edge[out=0, in=240,looseness=1.1] node[prob,below,pos=.6] {$\frac{1}{2}$}(mis)
(mis) edge node {$a, 1$} (pathhome)
(pathhome) edge node {$a, 2$} (start)
(t) edge[loop above] node{$a, 1$} (t)
;
\end{tikzpicture}
\caption{CMDP $\mdp=(\{s_1, s_2, s_3, s_4, s_5\}, \{a_1, a_2\},
\Delta, \cons, \{s_2, s_5\}, 20)$ where distributions in $\Delta$ are
indicated by gray numbers (we leave out $\textcolor{gray}{1}$ when an
action has only one successor), and the cost of an action follows its
name in the edge labels. Actions labeled by $a_i$ represent that
$\Delta$ and $\cons$ are defined identically for both actions $a_1$ a
$a_2$. The blue background indicates a target set $\target=\{s_2\}$, while the double circles represent the reload states.}
\label{fig:example}
\end{figure}

We illustrate the key concepts using the example CMDP $\mdp$ in
\Cref{fig:example}. Consider the parameterized history $\hist^i =
(s_1a_2s_5a_2)^is_1$. Then $\pathcons(\hist^i) = 3i$ while
$\enlev{2}{\hist^i} = 19$ for all $i \geq 1$. Thus, a strategy, that always picks $a_2$ in $s_1$ is
$d$-safe in $s_1$ for all $d \geq 2$. On the other hand, a strategy
that always picks $a_1$ in $s_1$ is \emph{not} $d$-safe in $s_1$ for
any $0\leq d \leq 20$. Now consider again the strategy that always
picks $a_2$; such a strategy is $2$-safe in $s_1$, but is not useful
if we attempt to eventually reach $\target$. Hence memoryless strategies are not sufficient in our setting. Consider instead a strategy $ \sigma $ that, in $ s_1 $, picks $a_1$
whenever the current resource level is at least $ 10 $ and picks $ a_2 $ otherwise. Such a strategy is $ 2 $-safe in $ s_1 $ and guarantees reaching $ s_2 $ with a positive probability: we need at least 10 units of
energy to return to  $s_5$ in the case we are unlucky and picking $ a_1 $ leads us to $ s_3 $. If we are lucky, $ a_1 $ leads us to $s_2$ by consuming just $ 5 $ units of the resource, witnessing that $ \sigma $ is
$\target$-positive. As a matter of fact, during \emph{every} revisit of $ s_5 $ there is a $ \frac{1}{2} $ chance of hitting $ s_2 $ during the next try, so $ \sigma $ actually ensures that $ s_2 $ is visited with probability 1. 

We note that solving a CMDP is very different from solving a consumption 2-player game~\cite{BCKN:consumption-games}. Indeed, imagine that in Figure~\ref{fig:example}, the
outcome of the action $a_1$ from state $s_1$ is resolved by an
adversarial player. In such a game, there is no strategy that would
guarantee reaching $\target$ at all.

The strategy $ \sigma $ we discussed above uses finite memory
 to track the
resource level exactly. An efficient representation of such strategies
is described in the next section.

\section{Counter Strategies}%
\label{sec:counter}

In this section, we define a succinct representation of finite-memory strategies via so called counter selectors.
Under the standard definition, a strategy $\sigma$ is a \emph{finite
memory} strategy, if $\sigma$ can be encoded by a \emph{memory structure}, a
type of finite transducer. Formally, a memory structure is a tuple
$\memstruct=(\mem,\nextf,\upf,\meminit)$ where %
  $\mem$ is a finite set of \emph{memory elements}, %
  $\nextf\colon \mem \times \states \rightarrow \actions$ is a
  \emph{next action} function, %
  $\upf\colon \mem \times \states\times\actions\times
  \states\rightarrow \mem$ is a \emph{memory update} function, and %
  $\meminit\colon \states\rightarrow \mem$ is the \emph{memory
  initialization function}. %
The function $ \upf $ can be lifted to a function $
\upfstep\colon  \mem \times \histories \rightarrow \mem $ as follows.
\[
\upfstep(m,\hist) =
\begin{cases}
m & \text{if } \hist = s \text{ has length 0}\\
\upf\big(\upfstep(m,\histpr), \last{\histpr}, a, t\big) &
\text{if } \hist = \histpr a t
\text{ for some }
a\in\actions \text{ and } t\in\states
\\
\end{cases}
\]
The structure $\memstruct$ encodes a strategy $\sigma_\memstruct$ such
that for each history $\hist=s_1a_1s_2\ldots s_n$ we have
$\sigma_\memstruct(\alpha) = \nextf\big(\upfstep(\meminit(s_1),\hist),
s_n\big) $.

In our setting, strategies need to track energy levels of
histories. Let us fix an CMDP $\mdp = (\states, \actions, \trans,
\cons, \reloads, \Ca)$. A non-exhausted energy level is always a number between $0$
and $\Ca(\mdp)$, which can be represented with a binary-encoded
bounded counter. We call strategies with such counters \emph{finite
counter (FC) strategies}. An FC  strategy selects actions to
play according to \emph{selection rules}.

\begin{definition}[Selection rule]
A \emph{selection rule} $\selrule$ for $\mdp$ is a partial function
from the set $\{0,\ldots,\Ca(\mdp)\}$ to $A$. Undefined value for some
$n$ is indicated by $\selrule(n)=\bot$.
\end{definition}

We use $\dom(\selrule) = \{n \in \{0 ,\ldots,\Ca(\mdp)\} \mid
\selrule(n)\neq \bot\}$ to denote the domain of $\selrule$ and we use
$\srules[\mdp]$ or simply $\srules$ for the set of all selection rules
for $\mdp$. Intuitively, a selection according to rule $\selrule$
selects the action that corresponds to the largest value from
$\dom(\varphi)$ that is not larger than the current energy level. To
be more precise, if $ \dom(\selrule) $ consists of numbers $ n_1 < n_2
< \cdots < n_k $, then the action to be selected in a given moment is
$\selrule(n_i)$, where $n_i$ is the largest element of $
\dom(\selrule) $ which is less then or equal to the current amount of
the resource. In other words, $\selrule(n_i)$ is to be selected if the
current resource level is in $ [n_i,n_{i+1}) $ (putting $ n_{k+1} =
\infty $).

\begin{definition}[Counter selector]
A \emph{counter selector} for $\mdp$ is a function $ \selector\colon
\states \rightarrow \srules $.
\end{definition}

A counter selector itself is not enough to describe a strategy. A
strategy needs to keep track of the energy level throughout the path.
With a vector $\vec{r}\in \{0,\ldots,\Ca(\mdp)\}^{\states}$ of initial
resource levels, each counter selector $\selector$ defines a strategy
$\selector^\vec{r}$ that is encoded by the following memory structure
$ (\mem,\nextf,\upf,\meminit)$ with $a\in\actions$ being a globally
fixed action (for uniqueness). We stipulate that $\bot < n$ for all $n\in \Nset$.

\begin{compactitem}
\item $\mem = \{\bot\} \cup \{0,\ldots,\Ca(\mdp)\}$.%
\item Let $m\in \mem$ be a memory element, let $s\in \states$ be a
state, let $n\in\dom(\selector(s))$ be the largest element of
$\dom(\selector(s))$ such that $n \leq m$. Then
$\nextf(m,s) = \selector(s)(n)$ if $n$ exists, and $\nextf=a$ otherwise.

\item The function $ \upf $ is defined for each $ m\in \mem, a\in
\actions, s,t\in \states $ as follows.
\[
\upf(m,s,a,t) =
\begin{cases}
  m - \cons(s,a) &
    \text{if } s \not \in \reloads \text{ and }
    \cons(s,a) \leq m \neq \bot\\
  \Ca(\mdp) -\cons(s,a) &
    \text{if } s \in \reloads \text{ and }
    \cons(s,a) \leq \Ca(\mdp) \text{ and }
    m \neq \bot\\
\bot & \text{otherwise}.
\end{cases}
\]
\item The function $ \meminit $ is $ \meminit(s) = \veccomp{\vec{r}}{s}$.
\end{compactitem}

A strategy $ \sigma $ is a finite counter (FC) strategy if there is a
counter selector $ \selector $ and a vector $ \vec{r} $ such that
$\sigma = \selector^{\vec{r}}$. The counter selector can be imagined
as a finite-state device that implements $ \sigma $ using $ \calO(\log
(\Ca(\mdp))) $ bits of additional memory (counter) used to represent
numbers $0,1,\ldots,\Ca(\mdp)$. The device uses the counter to keep
track of the current resource level, the element $ \bot $ representing
energy exhaustion. Note that a counter selector can be exponentially more succinct than the corresponding memory structure.

%
\section{Safety}\label{sec:safety}%
In this section, we 
present an
algorithm that computes, for each state, the minimal value $d$ (if it 
exists) such that there exists a $d$-safe strategy from that state. We also provide
the corresponding strategy. In the remainder of the section we fix an MDP $ \mdp $.

A $d$-safe run has the following two properties: (i) It consumes at
most $d$ units of the resource (energy) before it reaches the first reload
state, and (ii) it never consumes more than $\Ca(\mdp)$ units of
the resource between 2 visits of reload states. To ensure (ii), we need to
identify a maximal subset $\reloads'\subseteq \reloads$ of reload states for which there is a strategy $\sigma$ that, starting in some $r\in
\reloads'$, can always reach $\reloads'$ again (within at least one
step) using at most $\Ca(\mdp)$ resource units. The $ d $-safe strategy we seek can be then assembled from  $\sigma$ and from a strategy that suitably navigates towards
$\reloads'$, which is needed for (i).

In the core of both properties (i) and (ii) lies the problem of \emph{minimum cost reachability.} Hence, in the next subsection, we start with presenting necessary results on this problem. 


\subsection{Minimum Cost Reachability}

The problem of minimum cost reachability with non-negative costs was studied before~\cite{KBBEGRZ:short-path-interdiction}. Here we present a simple approach to the problem used in our implementation.

\begin{definition}
Let $\target\subseteq \states$ be a set of \emph{target} states, let
$\path=s_1 a_1 s_2 \ldots$ be a finite or infinite path, and let $1
\leq f$ be the smallest index such that $s_f\in \target$. We define
\emph{consumption of $\path$ to $\target$} as
$\reachcons{\mdp,\target}(\path) = \pathcons(\alpha\pref{f})$
if $f$ exists and we set $\reachcons{\mdp,\target}(\path) = \infty$
otherwise. For a strategy $\sigma$ and a state $s\in \states$ we define
$\reachcons{\mdp,\target}(\sigma, s) =
\sup_{\run\in\compatible{\sigma}{s}} \reachcons{\mdp,\target}(\run)$.

\noindent
A \emph{minimum cost reachability of $\target$ from $s$} is a
vector defined as
\[
\minreach{\mdp, \target}(s) = \inf\big\{
\reachcons{\mdp,\target}(\sigma, s) \mid \sigma 
\text{ is a strategy for }\mdp
\big\}.
\]
\end{definition}

As usual, we drop the subscript $_\mdp$ when $\mdp$ is clear from
context. Intuitively, $d=\minreach{\target}(s)$ is the minimal initial
load with which some strategy can ensure reaching $ T $ with consumption at most $ d $, when starting in $ s $. We say that a
strategy $\sigma$ is optimal for $\minreach{\target}$ if we have that
$\minreach{\target}(s) = \reachcons{\target}(\sigma, s)$ for all
states $s\in \states$.

We also define functions $\reachcons{\mdp, \target}^+$ and the vector
$\minreach{\mdp,\target}^+$ in a similar fashion with one exception:
we require the index $f$ from definition of
$\reachcons{\mdp,\target}(\path)$ to be strictly larger than 1, which
enforces to take at least one step to reach $T$.

For the rest of this section, fix a target set $ T $ and consider the following functional $ \mathcal{F} $:

\[
\veccomp{\mathcal{F}(\vec{v})}{s} =
\begin{cases}
  \min_{a\in \actions} 
    \left(\cons(s,a) + \max_{t\in \Succ(s,a)} \veccomp{\vec{v}}{t}\right) &
    s \not \in \target\\
  0 & s \in \target

\end{cases}
\]

$ \mathcal{F} $ is a simple generalization of the standard Bellman functional used for computing shortest paths in graphs. The proof of the following Theorem is rather standard and is omitted for brevity.

\begin{theorem}
\label{thm:minreach-main} Denote by $n$ the length of the longest
simple path in $\mdp$. Let $\vec{x}_T$ be a vector such that
$\veccomp{\vec{x}_T}{s}=0$ if $s\in \target$ and
$\veccomp{\vec{x}_T}{s}=\infty$ otherwise. Then iterating
$\mathcal{F}$ on $\vec{x}_T$ yields a fixpoint in at most $n$ steps
and this fixpoint equals $\minreach{\target}$.
\end{theorem}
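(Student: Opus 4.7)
The plan is to argue in three stages: establish monotonicity of the iteration, reinterpret the finite-horizon iterates $\vec{v}^{(i)} := \mathcal{F}^i(\vec{x}_T)$ as values of bounded-horizon games, and then collapse the horizon to $n$ using a memoryless-optimal strategy in the corresponding two-player game.

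First I would verify that $\mathcal{F}$ is monotone on $\extNset^{\states}$: both $\min$ and $\max$ preserve the pointwise order, and adding the nonnegative constant $\cons(s,a)$ preserves it as well. Since $\vec{x}_T$ assigns $\infty$ to every non-target state and $0$ to target states (which are fixed by $\mathcal{F}$), one has $\mathcal{F}(\vec{x}_T) \leq \vec{x}_T$ componentwise, so by induction the sequence $\vec{x}_T \geq \vec{v}^{(1)} \geq \vec{v}^{(2)} \geq \cdots$ is non-increasing.

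Next I would interpret the approximants combinatorially. Define $g^{(i)}(s)$ to be the infimum over strategies $\sigma$ of the supremum, over $\sigma$-compatible runs $\rho$ from $s$, of $\pathcons(\rho\pref{f})$, where $f$ is the first index with $\rstate{f}\in T$, using the convention that the value is $\infty$ whenever $f > i+1$. A direct induction on $i$ shows $g^{(i)} = \vec{v}^{(i)}$: the base case $i=0$ matches $\vec{x}_T$, and in the inductive step one reads the first move of the agent off the recurrence---picking $a$ costs $\cons(s,a)$, after which an adversarially chosen successor $t\in\Succ(s,a)$ continues the bounded game for $i-1$ further steps. In particular, $\vec{v}^{(i)}(s)\geq \minreach{\target}(s)$ for every $i$, since any strategy witnessing $g^{(i)}(s)$ is feasible in the unrestricted problem.

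The main work is the reverse inequality $\vec{v}^{(n)} \leq \minreach{\target}$. If $\minreach{\target}(s)=\infty$ it is vacuous, so assume $c := \minreach{\target}(s) < \infty$. Minimizing $\reachcons{\target}(\sigma, s) = \sup_\rho \pathcons(\rho\pref{f})$ over $\sigma$ is precisely the value problem in the two-player min-cost reachability game on $\mdp$ with non-negative integer weights, where the agent picks actions and an adversary picks successors from $\Succ(s,a)$. By classical results for such games (used also in~\cite{KBBEGRZ:short-path-interdiction,BGHM:shortest-path-games}), an optimal memoryless strategy $\sigma^{*}$ attaining $c$ exists. Under $\sigma^{*}$ every compatible run from $s$ must visit $T$, since otherwise it would yield $\reachcons{\target}(\rho)=\infty$, contradicting $\reachcons{\target}(\sigma^{*},s)\leq c$; and by memorylessness the prefix up to the first visit of $T$ must be simple, since a repeated state would, under the same memoryless rule, produce an infinite cyclic compatible run avoiding $T$. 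Hence every compatible run reaches $T$ within $n$ steps, so $\sigma^{*}$ itself witnesses $g^{(n)}(s) \leq c$. Combining the two directions yields $\vec{v}^{(n)} = \minreach{\target}$, and then the chain $\minreach{\target} = \vec{v}^{(n)} \geq \vec{v}^{(n+1)} \geq \minreach{\target}$ forces $\vec{v}^{(n)} = \vec{v}^{(n+1)}$, a fixpoint of $\mathcal{F}$. I expect the most delicate step to be the appeal to memoryless optimality; should a self-contained proof be preferred, one can replace it with a direct \emph{loop removal} argument that, given any strategy with cost at most $c$, iteratively splices out cycles occurring before the first visit of $T$---this cannot increase the worst-case cost because weights are non-negative, and after finitely many splicings every compatible run reaches $T$ via a simple prefix, i.e., within at most $n$ steps.
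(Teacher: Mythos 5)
Your proof is correct and follows essentially the same route as the paper: identify the iterates $\mathcal{F}^i(\vec{x}_T)$ with the $i$-step-bounded reachability values, then use a memoryless optimal strategy together with a cycle-pumping argument to conclude that the optimum is attained within $n$ steps. The only substantive difference is that the paper proves the existence of a memoryless optimal strategy itself via an explicit ranking construction (Lemma~\ref{lem:reachopt}) rather than citing it as a classical result; note also that your fallback ``loop removal'' sketch would need more care, since it is not obvious that finitely many splicings suffice when infinitely many compatible histories contain cycles.
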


To compute $\minreach{\mdp, \target}^+$, we construct a new CMDP
$\widetilde\mdp$ from $\mdp$ by adding a copy $\tilde{s}$ of each
state $s \in \states$ such that dynamics in $\tilde{s}$ is the same as
in $s$; i.e. for each $a\in\actions$,
$\trans(\tilde{s},a)=\trans(s,a)$ and $\cons(\tilde{s},a) =
\cons(s,a)$. We denote the new state set as $\widetilde{\states}$.
We don't change the set of reload states, so $\tilde{s}$ is \emph{never} in $T$, even if $ s $ is. Given the new CMDP
$\widetilde\mdp$ and the new state set as $\widetilde{\states}$, the following lemma is straightforward.

\begin{lemma}
\label{lem:mininitcons-minreach}%
Let $\mdp$ be a CMDP and let $\widetilde{\mdp}$ be the CMDP
constructed as above. Then for each state $s$ of $\mdp$ it holds
$\minreach{\mdp, \target}^+(s) =
\minreach{\widetilde{\mdp},\target}(\tilde{s})$.
\end{lemma}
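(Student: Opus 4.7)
The plan is to exhibit a natural bijection between $s$-initiated runs in $\mdp$ and $\tilde{s}$-initiated runs in $\widetilde\mdp$ that preserves the consumption to $T$ and that converts the ``$+$'' condition on the $\mdp$-side into the structural fact $\tilde{s}\notin T$ on the $\widetilde\mdp$-side. I expect no conceptually hard step: the argument is a bookkeeping exercise exploiting that from $\tilde{s}$ the very first transition already lands in a non-tilded state of $\widetilde\mdp$, after which the two CMDPs are indistinguishable.

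First I would relate strategies. Given a strategy $\sigma$ for $\mdp$, I define a strategy $\tilde\sigma$ for $\widetilde\mdp$ by $\tilde\sigma(\tilde{s}\,a_1 s_2 a_2 \ldots s_n) = \sigma(s\,a_1 s_2 a_2 \ldots s_n)$ for every history of $\widetilde\mdp$ that starts at $\tilde{s}$; on all other histories $\tilde\sigma$ may be set arbitrarily, since they are irrelevant for runs in $\compatible[\widetilde\mdp]{\tilde\sigma}{\tilde{s}}$. Conversely, any strategy $\tilde\sigma$ for $\widetilde\mdp$ yields a strategy $\sigma$ for $\mdp$ by the same relabelling. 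The construction is well-defined because every $\tilde\sigma$-compatible, $\tilde{s}$-initiated history has the shape $\tilde{s}\,a_1 s_2 \ldots s_n$ with $s_2,\ldots,s_n\in\states$: the very first transition uses the distribution $\trans(\tilde{s},a_1) = \trans(s,a_1)\in\dist(\states)$, so we leave the tilded part immediately and never return to it.

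Next I would observe that under this bijection the set $\compatible[\widetilde\mdp]{\tilde\sigma}{\tilde{s}}$ is in measure-preserving bijection with $\compatible[\mdp]{\sigma}{s}$ via the map $\tilde{s}\,a_1 s_2 a_2 \ldots \leftrightarrow s\,a_1 s_2 a_2 \ldots$; the two runs agree on all actions and on all state positions $i\geq 2$, and the probabilities of matching prefixes coincide because $\trans(\tilde{s},a_1) = \trans(s,a_1)$ and the remaining transitions are taken in the shared part of the two CMDPs. Since $\cons(\tilde{s},a_1) = \cons(s,a_1)$, the path-consumption up to any index $i\geq 2$ is identical in both runs. Because $\tilde{s}\notin T$, the first index $f$ with $s_f\in T$ along the $\widetilde\mdp$-run is either undefined or satisfies $f\geq 2$, which is exactly the ``$+$'' requirement in the definition of $\reachcons{\mdp,T}^+$ on the $\mdp$-side. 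Hence $\reachcons{\widetilde\mdp,T}(\tilde\sigma,\tilde{s}) = \reachcons{\mdp,T}^+(\sigma,s)$ for every corresponding pair $(\sigma,\tilde\sigma)$.

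Finally, taking the supremum over compatible runs and then the infimum over strategies on both sides of this equality yields the desired identity $\minreach{\mdp,T}^+(s) = \minreach{\widetilde\mdp,T}(\tilde{s})$. The only (rather mild) subtlety is to remember that the ``$+$'' condition $f>1$ is enforced for free on the $\widetilde\mdp$-side by the structural fact $\tilde{s}\notin T$, so no extra cost or memory manipulation is needed beyond the state duplication.
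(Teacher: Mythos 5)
Your proof is correct: the paper declares this lemma ``straightforward'' and omits the argument entirely, and the run/strategy relabelling you give (using that $\trans(\tilde{s},a)=\trans(s,a)$ forces an immediate exit from the tilded copy, and that $\tilde{s}\notin T$ converts the $f>1$ requirement into an ordinary first-visit condition) is exactly the intended justification. The one point that needed care --- that the ``$+$'' condition means the first visit \emph{after} at least one step, so that the claim is nontrivial precisely when $s\in\target$ --- is handled correctly by your observation that positions $i\geq 2$ of corresponding runs coincide.
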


\subsection{Safely Reaching Reload States}%
In the following, we use $\MinInitCons_\mdp$ (read \emph{minimal
initial consumption}) for the vector $\minreach{\mdp, \reloads}^+$ --
minimal resource level that ensures we can surely reach a reload state in at
least one step. By Lemma~\ref{lem:mininitcons-minreach} and
Theorem~\ref{thm:minreach-main} we can construct $\widetilde{\mdp}$
and iterate the operator $\mathcal{F}$ for $|S|$ steps to compute
$\MinInitCons_\mdp$. Note that $S$ is the state space of $\mdp$ since
introducing the new states into $ \widetilde{\mdp} $ did not increase the length of the maximal
simple path. However, we can avoid the construction of
$\widetilde{\mdp}$ and still compute $\MinInitCons_\mdp$ using a \emph{truncated} version of the functional $ \mathcal{F} $, which is the approach used in our implementation.
%
We first introduce the following truncation operator:
\[
\veccomp{\strunc[\mdp]{\vec{x}}}{s} =
\begin{cases}
  \vec{x}(s) &
    \text{if } s \not \in \reloads, \\
  0 & \text{if }s \in \reloads.\\
\end{cases}\]
Then, we define a truncated functional $ \mathcal{G} $ as follows:
\[
\veccomp{\mathcal{G}(\vec{v})}{s} =
\min_{a\in \actions}
  \left(\cons(s,a) +
    \max_{s'\in \Succ(s,a)} \veccomp{\strunc{\vec{v}}_{\mdp}}{s'}\right).
\]

\begin{algorithm}[t]
\KwIn{CMDP $\mdp=(\states, \actions, \trans, \cons, \reloads, \Ca)$}
\KwOut{The vector $\MinInitCons_\mdp$}
initialize $\vec{x}\in \extNset^{\states}$ to be $\infty$ in every component\;
\Repeat{$\,\vec{x}_{\mathit{old}} = \vec{x}$}{
  $ \vec{x}_\old \leftarrow \vec{x} $\;
  \ForEach{$ s \in \states $}{
    $c \leftarrow
      \min_{a\in \actions}\Big\{ \cons(s,a) + \max_{s' \in \Succ(s,a)} \veccomp{\strunc{\vec{x}_\old}_{\mdp}{}}{s'}   \Big\}$\;
    \If{$c < \vec{x}(s)$}{
      $ \vec{x}(s) \leftarrow c$\;
    }
  }
}
\Return{$\vec{x}$}
\caption{Algorithm for computing $\MinInitCons_\mdp$.}
\label{algo:mininitcons_iterative}
\end{algorithm}

The following lemma connects the iteration of $ \mathcal{G} $ on $ \mdp $ with the iteration of $ \mathcal{F} $ on $ \widetilde{\mdp} $.

\begin{lemma}
\label{lem:minitconst-reach-to-direct} Let $\infvec\in
\extNset^{\states}$ be a vectors with all components equal to
$\infty$. Consider iterating $\mathcal{G}$ on $\infvec$ in $\mdp$ and
$\mathcal{F}$ on $ \vec{x}_\reloads $ in $ \widetilde{\mdp} $. Then
for each $i\geq 0$ and each $s\in\reloads$ we have
$\veccomp{\mathcal{G}^i(\infvec)}{s} =
\veccomp{\mathcal{F}^i(\vec{x}_\reloads)}{\tilde{s}}$ and for every
$s\in\states\setminus \reloads$ we have $
\veccomp{\mathcal{G}^i(\infvec)}{s} =
\veccomp{\mathcal{F}^i(\vec{x}_\reloads)}{s} $.%
\end{lemma}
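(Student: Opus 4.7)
The plan is to prove the lemma by induction on the iteration count $i$, with the main observation being that the truncation operator $\strunc[\mdp]{\cdot}$ plays the same role for $\mathcal{G}$ on $\mdp$ that the target branch of $\mathcal{F}$ plays for $\widetilde{\mdp}$ with $\target = \reloads$. A useful auxiliary fact, itself a one-line induction from the second branch of $\mathcal{F}$, is that $\mathcal{F}^i(\vec{x}_\reloads)(r) = 0$ for every $r \in \reloads$ and every $i \geq 0$.

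The base case $i = 0$ is immediate: $\mathcal{G}^0(\infvec)$ is $\infty$ at every state, while $\mathcal{F}^0(\vec{x}_\reloads)$ is $0$ on the reload states and $\infty$ elsewhere; since by construction no tilde copy $\tilde{s}$ lies in $\target$, we have $\mathcal{F}^0(\vec{x}_\reloads)(\tilde{s}) = \infty$ for every $s$, matching the claim. For the inductive step at level $i+1$, I use that $\tilde{s} \notin \target$ by construction and that in $\widetilde{\mdp}$ one has $\Succ(\tilde{s}, a) = \Succ(s, a) \subseteq \states$; the analogous statement holds for an original $s \in \states \setminus \reloads$. Hence both $\mathcal{F}^{i+1}(\vec{x}_\reloads)(\tilde{s})$ (for $s \in \reloads$) and $\mathcal{F}^{i+1}(\vec{x}_\reloads)(s)$ (for $s \notin \reloads$) unfold via the first branch of $\mathcal{F}$ into
\[
\min_{a \in \actions}\Bigl(\cons(s, a) + \max_{t \in \Succ(s,a)} \mathcal{F}^i(\vec{x}_\reloads)(t)\Bigr),
\]
while $\mathcal{G}^{i+1}(\infvec)(s)$ equals the same expression with $\veccomp{\strunc[\mdp]{\mathcal{G}^i(\infvec)}}{t}$ in place of $\mathcal{F}^i(\vec{x}_\reloads)(t)$.

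It therefore suffices to verify $\mathcal{F}^i(\vec{x}_\reloads)(t) = \veccomp{\strunc[\mdp]{\mathcal{G}^i(\infvec)}}{t}$ for every $t \in \states$. If $t \in \reloads$, both sides equal $0$: the left side by the auxiliary invariant, the right side by the definition of $\strunc[\mdp]{\cdot}$. If $t \notin \reloads$, the right side is $\mathcal{G}^i(\infvec)(t)$, which coincides with $\mathcal{F}^i(\vec{x}_\reloads)(t)$ by the induction hypothesis. I do not anticipate any conceptual obstacle here; the only care needed is to keep the two branches of $\mathcal{F}$ aligned with the two cases of $\strunc[\mdp]{\cdot}$ and to notice that tilde copies never appear among successors, so the iteration of $\mathcal{F}$ on $\widetilde{\mdp}$ unfolds entirely within the original state space after its very first step.
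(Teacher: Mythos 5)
Your proof is correct, and the overall skeleton---induction on $i$ driven by the identity $\veccomp{\strunc{\mathcal{G}^i(\infvec)}}{t} = \veccomp{\mathcal{F}^i(\vec{x}_\reloads)}{t}$ for every successor $t\in\states$, split according to whether $t\in\reloads$---is exactly the paper's. The one place you genuinely diverge is the second claim, for $s\in\states\setminus\reloads$: the paper deduces $\veccomp{\mathcal{F}^{i+1}(\vec{x}_\reloads)}{s} = \veccomp{\mathcal{F}^{i+1}(\vec{x}_\reloads)}{\tilde{s}}$ semantically, by invoking \Cref{lem:stepreach-cost} to identify both sides with $\minreach{\reloads}^{i+1}$ evaluated at $s$ and $\tilde{s}$ and noting these coincide for non-reload $s$, whereas you observe syntactically that $s$ and $\tilde{s}$ are both non-targets with identical consumption and identical successor sets (lying entirely in $\states$), so the first branch of $\mathcal{F}$ unfolds to the literal same expression at both. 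Your route is more elementary and self-contained---it needs no detour through the semantic characterization of the iterates---at no cost in length; the paper's version buys nothing here beyond reusing a lemma it has already proved for \Cref{thm:minreach-main}. Your explicit base case and the explicit auxiliary invariant $\veccomp{\mathcal{F}^i(\vec{x}_\reloads)}{r}=0$ for $r\in\reloads$ are both correct and fill in details the paper labels ``clear.''
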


Algorithm~\ref{algo:mininitcons_iterative} uses $ \mathcal{G} $ to compute the vector $ \MinInitCons_\mdp. $

\begin{theorem}
\label{thm:mininitconst-main}
Algorithm~\ref{algo:mininitcons_iterative} correctly computes the
vector $ \MinInitCons_\mdp $. Moreover, the repeat-loop terminates
after at most $|\states|$ iterations.
\end{theorem}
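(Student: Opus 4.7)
\textbf{Proof plan for Theorem~\ref{thm:mininitconst-main}.}
The plan is to reduce the analysis of Algorithm~\ref{algo:mininitcons_iterative} to the already-established Theorem~\ref{thm:minreach-main} and Lemma~\ref{lem:mininitcons-minreach}, via the bridging Lemma~\ref{lem:minitconst-reach-to-direct}. First I would observe that Algorithm~\ref{algo:mininitcons_iterative} implements the iteration of $\mathcal{G}$ on the all-$\infty$ vector $\infvec$: on each pass of the repeat-loop, the copy $\vec{x}_{\mathit{old}}$ stores the previous iterate, the new value $c$ assigned to each state is exactly $\veccomp{\mathcal{G}(\vec{x}_{\mathit{old}})}{s}$, and the guarding inequality $c<\vec{x}(s)$ is vacuous since $\mathcal{G}$ is monotone on $\infvec$ (so the sequence $\mathcal{G}^i(\infvec)$ is pointwise non-increasing). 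Hence, after $i$ passes, $\vec{x} = \mathcal{G}^i(\infvec)$, and the termination test $\vec{x}_{\mathit{old}}=\vec{x}$ is exactly the fixpoint test for $\mathcal{G}$.

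Next I would transport the question to the auxiliary CMDP $\widetilde\mdp$. By Lemma~\ref{lem:minitconst-reach-to-direct}, after $i$ iterations the value at a reload state $s\in\reloads$ matches $\veccomp{\mathcal{F}^i(\vec{x}_\reloads)}{\tilde s}$, and the value at a non-reload state $s\notin\reloads$ matches $\veccomp{\mathcal{F}^i(\vec{x}_\reloads)}{s}$. Apply Theorem~\ref{thm:minreach-main} to $\widetilde\mdp$ with target $\reloads$: iterating $\mathcal{F}$ on $\vec{x}_\reloads$ reaches its fixpoint $\minreach{\widetilde\mdp,\reloads}$ in at most $n$ steps, where $n$ is the longest simple path in $\widetilde\mdp$. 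Finally combine with Lemma~\ref{lem:mininitcons-minreach}: at a reload state $s$, the limit $\veccomp{\minreach{\widetilde\mdp,\reloads}}{\tilde s}$ equals $\veccomp{\minreach{\mdp,\reloads}^+}{s}=\veccomp{\MinInitCons_\mdp}{s}$; and at a non-reload state, reaching $\reloads$ in $\geq 1$ step coincides with reaching $\reloads$, and the copies $\tilde t$ are unreachable from $s$ in $\widetilde\mdp$, so the value equals $\veccomp{\minreach{\mdp,\reloads}}{s}=\veccomp{\minreach{\mdp,\reloads}^+}{s}=\veccomp{\MinInitCons_\mdp}{s}$. Together these yield correctness: the algorithm, upon termination, returns $\MinInitCons_\mdp$.

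The main obstacle, and the one place I expect the argument to require genuine care, is justifying that the relevant simple-path bound in $\widetilde\mdp$ is $|\states|$ rather than $2|\states|$ (the naive bound). The point is that in $\widetilde\mdp$ the copies $\tilde t$ have exactly the outgoing dynamics of their originals and are never successors of any state, so in any simple path at most one copy state can appear, and only as the initial vertex; any such simple path therefore uses at most $|\states|$ edges. This yields the $|\states|$-iteration bound for $\mathcal{F}$ on $\widetilde\mdp$, hence for $\mathcal{G}$ on $\mdp$, and thus for the repeat-loop. The remaining details (lifting $\mathcal{G}$'s fixpoint equality to the algorithm's stopping condition, and reconciling any one-iteration slack introduced by the equality test) are routine bookkeeping once the correspondence with $\widetilde\mdp$ is in place.
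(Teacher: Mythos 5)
Your proposal is correct and follows essentially the same route as the paper's proof: identify the repeat-loop with the iteration of $\mathcal{G}$ on $\infvec$, transport it to the iteration of $\mathcal{F}$ on $\vec{x}_\reloads$ in $\widetilde{\mdp}$ via Lemma~\ref{lem:minitconst-reach-to-direct}, and conclude with Theorem~\ref{thm:minreach-main} and Lemma~\ref{lem:mininitcons-minreach}. The only differences are in emphasis: you justify the $|\states|$ simple-path bound in $\widetilde{\mdp}$ more explicitly than the paper does, while the paper spends more effort on the synchronization of the two fixpoints (its ``$i=j$'' argument), which in your plan is correctly, if tersely, dispatched by noting that the pointwise correspondence holds at every iteration step.
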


\begin{proof}
The repeat-loop performs the iteration of the operator $ \mathcal{G}
$. We show that the fixed point of the iteration equals $
\MinInitCons_\mdp$. Consider the iteration of $ \mathcal{F} $ and
$\mathcal{G}$ on $\infvec$ and $ \vec{x}_\reloads $, respectively. Let
$i$ be the number of steps (possibly infinite) after which the $
\mathcal{F} $-iteration reaches a fixed point and $ j $ the number of
steps after which the $ \mathcal{G} $-iteration reaches a fixed point.
We prove that $ i = j $. Indeed, for each step $ k $ we have that
\begin{equation}
\label{eq:fixpoint-minitcons}%
\veccomp{\mathcal{G}^{k+1}(\infvec)}{s}\neq\veccomp{\mathcal{G}^{k}(\infvec)}{s} \Leftrightarrow
\veccomp{\mathcal{F}^{k+1}(\vec {x}_\reloads)}{\tilde{s}}\neq\veccomp{\mathcal{F}^{k}(\vec{x}_\reloads)}{\tilde{s}}
\end{equation}
(by \Cref{lem:minitconst-reach-to-direct}). Hence, $ i \geq j $. For
the reverse inequality, assume that $i>j$. Then there is $t \in
\widetilde{\states}$ such that
$\veccomp{(\vec{x}_\reloads)}{t}\neq\veccomp{\mathcal{F}^{j}(\vec{x}_\reloads)}{t}$.%
From \eqref{eq:fixpoint-minitcons} and from the fact that the
$\mathcal{G}$-iteration already reached a fixed point we get that $ t
\in \states $. Then either $t\in \states \setminus \reloads$, but then
by \Cref{lem:minitconst-reach-to-direct} we have $
\veccomp{\mathcal{G}^{j+1}(\infvec)}{t} =
\veccomp{\mathcal{F}^{j+1}(\vec{x}_\reloads)}{t}  \neq
\veccomp{\mathcal{F}^{j}(\vec{x}_\reloads)}{t} =
\veccomp{\mathcal{G}^{j}(\infvec)}{t}$ a contradiction with
$\mathcal{G}$-iteration already being at a fixed point. Or  $ t \in
\states \cap \reloads $, but then $
\veccomp{\mathcal{F}^{j+1}(\vec{x}_\reloads)}{t}  =
\veccomp{\mathcal{F}^{j}(\vec{x}_\reloads)}{t} = 0$, again a
contradiction.

Hence, iterating $ \mathcal{G} $ also reaches a fixed point in at most
$ |\states| $-steps, by \Cref{thm:minreach-main}. Moreover, for each
$s\in \states$ we have $\veccomp{ \mathcal{G}^i(\infvec)}{s}  =
\veccomp{\mathcal{F}^i(\vec{x}_\reloads)}{\tilde{s}} =
\minreach{\widetilde{\mdp},\reloads}(\tilde{s}) =
\MinInitCons_\mdp(s)$, the first equality coming from
\Cref{lem:minitconst-reach-to-direct}, the second from
\Cref{thm:minreach-main} and from the fact that $ i=j $ and the last
one from \Cref{lem:mininitcons-minreach}.%
\qed%
\end{proof}

\subsection{Solving the Safety Problem}

We want to identify a set $\reloads'\subseteq R$ such that we can
reach $\reloads'$ in at least 1 step and with consumption at most $\Ca
= \Ca(\mdp)$, from each $r \in \reloads'$. This entails identifying the maximal
$\reloads'\subseteq R$ such that $\MinInitCons_{\mdp(\reloads')} \leq
\Ca$ for each $r\in \reloads'$. This can be done by initially setting $ \reloads'=\reloads $ and iteratively removing states that have $ \MinInitCons_{\mdp(\reloads')} >
\Ca $, from $ \reloads' $, as in Algorithm~\ref{algo:initload}.

\begin{algorithm}
\KwIn{CMDP $ \mdp $}
\KwOut{The vector $\safe_{\mdp}$}
$ \Ca \leftarrow \Ca(\mdp) $\;
$\relvar \leftarrow \reloads $; $\varToRemove \leftarrow \emptyset$\;
\Repeat{$\varToRemove = \emptyset$}{
  $ \relvar \leftarrow \relvar\smallsetminus \varToRemove $\;
  $ \vec{mic} \leftarrow \MinInitCons_{\mdp(\relvar)}$\label{algoline:mcon}\;
  $ \varToRemove \leftarrow
    \{ r \in \relvar \mid \veccomp{\vec{mic}}{r} > \Ca \}$\;
}
\ForEach{$ s\in \states $}{
\lIf{$ \veccomp{\vec{mic}}{s} > \Ca $}
  {$ \veccomp{\vec{out}}{s} = \infty $}
  \lElse{$ \veccomp{\vec{out}}{s}  = \veccomp{\vec{mic}}{s}$}
}
\Return{$\vec{out}$}
\caption{Computing the vector $\safe_{\mdp}$.}
\label{algo:initload}
\end{algorithm}

\begin{theorem}\label{thm:safety-main}
Algorithm~\ref{algo:initload} computes the vector $ \safe_{\mdp} $ in polynomial time.
\end{theorem}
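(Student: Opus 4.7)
The plan is to establish polynomial-time termination and then match upper and lower bounds on $\safe_\mdp$. Termination is immediate: each iteration of the outer repeat-loop that does not exit removes at least one state from $\relvar$, so there are at most $|\reloads| + 1$ iterations, and each invokes Algorithm~\ref{algo:mininitcons_iterative} on $\mdp(\relvar)$, which runs in polynomial time by~\Cref{thm:mininitconst-main}. Let $\relvar^*$ denote the terminal value of $\relvar$ and $\vec{mic}^*$ denote $\MinInitCons_{\mdp(\relvar^*)}$. The exit condition guarantees $\veccomp{\vec{mic}^*}{r} \leq \Ca$ for every $r \in \relvar^*$; that is, from each $r \in \relvar^*$ some strategy ensures return to $\relvar^*$ in at least one step with worst-case consumption at most $\Ca$. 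The main structural claim is that $\relvar^*$ coincides with the maximal subset of $\reloads$ admitting this property, i.e., the greatest set of reloads from which safe operation can be sustained indefinitely.

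For the upper bound $\safe_\mdp(s) \leq \veccomp{\vec{out}}{s}$, I would construct, for each $s$ with $\veccomp{\vec{mic}^*}{s} \leq \Ca$, a $\veccomp{\vec{mic}^*}{s}$-safe strategy from $s$ by composing $\minreach$-optimal strategies: play an optimal strategy for $\minreach{\mdp(\relvar^*), \relvar^*}^+$ from $s$ until some $r \in \relvar^*$ is reached; at $r$ the reload restores the resource to $\Ca \geq \veccomp{\vec{mic}^*}{r}$, so one may restart the construction from $r$; iterating indefinitely yields a safe run. The safety of the composed strategy is justified by repeated application of~\Cref{lem:safe-path-extension}.

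For the matching lower bound, I would prove that any $d$-safe strategy $\sigma$ from $s$ with $d \leq \Ca$ witnesses $\veccomp{\vec{mic}^*}{s} \leq d$. The key auxiliary claim is that every $\sigma$-compatible run visits $\reloads$ infinitely often, and that every reload visited (past an initial prefix) belongs to $\relvar^*$. The infinite-recurrence part uses the \emph{decreasing} assumption on $\mdp$: a run eventually avoiding $\reloads$ would, by finiteness of $\states$, traverse a simple cycle outside $\reloads$ infinitely often; the decreasing assumption forces this cycle to have strictly positive consumption, so the energy would drain to exhaustion, contradicting $d$-safety. The $\relvar^*$-membership part proceeds by induction on the iteration $i$ at which a reload $r$ is removed from $\relvar$: assuming all earlier-removed reloads admit no $\Ca$-safe strategy, I would show the same for $r$ by exploiting $\MinInitCons_{\mdp(\relvar_{i-1})}(r) > \Ca$---any candidate $\Ca$-safe strategy from $r$ admits a compatible run that either first reaches $\relvar_{i-1}$ with raw path cost exceeding $\Ca$ (a direct violation of safety, since by induction the intermediate reloads in $\reloads \setminus \relvar_{i-1}$ cannot yield a safe continuation), or never reaches $\relvar_{i-1}$ (reducing to the infinite-recurrence case applied to the relevant suffix).

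The main obstacle I anticipate is the induction in the lower bound: it must carefully combine the inductive hypothesis on pruned reloads, the adversarial worst-case guarantee of $\MinInitCons$ on the pruned MDP $\mdp(\relvar_{i-1})$, and the decreasing assumption, tracking how a hypothetical $\Ca$-safe strategy could interact with reloads that have already been pruned in earlier iterations.
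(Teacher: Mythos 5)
Your proposal is correct and follows essentially the same route as the paper: the same termination bound via shrinking $\relvar$, the same achievability construction that stitches together $\minreach^+$-optimal strategies restarted at each visit to $\relvar^*$, and the same optimality argument by induction on the iteration at which a reload is pruned, using the decreasing assumption to force infinite recurrence of reload visits and $\MinInitCons_{\mdp(\relvar_{i-1})}(r) > \Ca$ to derive the contradiction. The only differences are cosmetic (you merge the paper's two cases for the inductive step into one case split on the hypothetical run, and you appeal to Lemma~\ref{lem:safe-path-extension} where the paper tracks energy levels between consecutive reloads directly).
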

\begin{proof}
The algorithm clearly terminates. Computing $  \MinInitCons_{\mdp(\relvar)} $ on line~\ref{algoline:mcon} takes a polynomial number of steps per call due to \Cref{thm:mininitconst-main} and since $ \mdp(\mathit{Rel}) $ has asymptotically the same size as $ \mdp $. Since the repeat loop performs at most $ |\reloads| $ iterations, the complexity follows.

As for correctness, we first prove that $ {\vec{out}}
\leq \safe_\mdp $. It suffices to prove for each $s\in\states$ that
upon termination, $\veccomp{\vec{mic}}{s} \leq \safe_{\mdp}(s)$
whenever the latter value is finite. Since $ \MinInitCons_{\mdp'}(s)
\leq \Safe_{\mdp'}(s) $ for each MDP $ \mdp' $ and each its state such
that $ \safe_{\mdp'}(s) <\infty $, it suffices to show that $
\safe_{\mdp(\relvar)} \leq \safe_{\mdp} $ is an invariant of the
algorithm (as a matter of fact, we prove that $ \safe_{\mdp(\relvar)}
= \safe_{\mdp} $). To this end, it suffices to show that at every
point of execution $ \safe_{\mdp}(t) = \infty $ for each $ t\in
\reloads\setminus \relvar $: indeed, if this holds, no strategy that
is safe for some state $s \neq t$ can play an action $a$ from $s$ such
that $t \in \Succ(s,a)$, so declaring such states non-reloading does
not influence the $ \safe_{\mdp} $-values. So denote by $ \relvar_i $
the contents of $ \relvar $ after the $ i $-th iteration. We prove, by
induction on $ i $, that $ \safe_{\mdp}(s) = \infty $ for all $ s \in
\reloads \setminus \relvar $. For $ i = 0 $ we have $ \reloads =
\relvar $, so the statement holds. For $ i>0 $, let $ s \in \reloads
\setminus \relvar_{i}$, and let $ \sigma $ be any strategy. If some
run from $\compatible{\sigma}{s}$ visits a state from $ \reloads
\setminus \relvar_{i-1} $, then $ \sigma $ is not $ \Ca $-safe, by
induction hypothesis. Now assume that all such runs only visit reload
states from $\relvar_{i-1}$. Then, since $
\MinInitCons_{\mdp(\relvar_{i-1})}(s)>\Ca $, there must be a run
$\run\in\compatible{\sigma}{s}$ with
$\reachcons{\relvar_{i-1}}^+(\run)>\Ca$. Assume that $ \run $ is $ \Ca $-safe in $ s $. Since we consider only decreasing CMDPs, $ \run $ must infinitely often visit a reload state (as it cannot get stuck in a zero cycle). Hence, there exists an index $ f>1 $ such that $ \rstate{f}\in\relvar_{i-1} $, and for this $ f $ we have
$\enlev{\Ca}{\run\pref{f}} = \bot$, a contradiction. So again, $\sigma$ is not safe in
$s$. Since there is no safe strategy from $s$, we have
$\safe_{\mdp}(s) = \infty$.

Finally, we need to prove that upon termination, $ \vec{out} \geq \safe_{\mdp} $. Informally, per the definition of $ \vec{out} $, from every state $ s $ we can ensure reaching a state of $ \mathit{Rel} $ by consuming at most $ \vec{out}(s) $ units of the resource. Once in $ \mathit{Rel} $, we can ensure that we can again return to $ \mathit{Rel} $ without consuming more than $ \Ca $ units of the resource. Hence, when starting with $ \vec{out}(s) $ units, we can surely prevent resource exhaustion. \qed
\end{proof}

\begin{definition}
\label{def:safeact} We call an action $ a $ \emph{safe} in a state $ s
$ if one of the following conditions holds:
\begin{itemize}
\item $ s \not \in \reloads $ and $ \cons(s,a) + \max_{t\in
\Succ(s,a)} {\safe_{\mdp}}(t) \leq \safe_{\mdp}(s) $; or%
\item $ s \in \reloads $ and $ \cons(s,a) + \max_{t\in \Succ(s,a)}
{\safe_{\mdp}}(t) \leq \Ca(\mdp) $.
\end{itemize}
Note that by the definition of $ \safe_{\mdp} ,$ for each state $ s $
with $ \safe_{\mdp}(s) < \infty$ there is always at least one action
safe in $ s $. For states $ s $ s.t. $ \safe_{\mdp}(s) = \infty $, we
stipulate all actions to be safe in $ s $.
\end{definition}

\begin{theorem}
\label{thm:safety-strat} Any strategy which always selects an action
that is safe in the current state is $ \safe_{\mdp}(s) $-safe in every
state $ s $. In particular, in each consumption MDP $ \mdp $ there is
a memoryless strategy $ \sigma $ that is $ \safe_{\mdp}(s) $-safe in
every state $ s $. Moreover, $ \sigma $ can be computed in polynomial time. \end{theorem}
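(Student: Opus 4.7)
The plan is to prove by induction on history length that whenever $\sigma$ always selects a safe action, and the play starts from any state $s$ with initial load $d=\safe_\mdp(s)$, every $\sigma$-compatible history $\hist$ satisfies the joint invariant
\[
\enlev{d}{\hist} \neq \bot \quad\text{and}\quad \enlev{d}{\hist} \geq \safe_\mdp(\last{\hist}).
\]
Once this invariant is in place, the definition of $d$-safety applied to all prefixes of any $\sigma$-compatible run immediately yields that $\sigma$ is $\safe_\mdp(s)$-safe from $s$ (states with $\safe_\mdp(s)=\infty$ are vacuous).

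For the inductive step I would take $\hist=\histpr a t$ with $a=\sigma(\histpr)$ safe in $s'=\last{\histpr}$, and split on whether $s'\in\reloads$. In the non-reload case, \Cref{def:safeact} gives $\cons(s',a)+\safe_\mdp(t)\leq \safe_\mdp(s')$; together with the induction hypothesis $\enlev{d}{\histpr}\geq\safe_\mdp(s')$ this guarantees $\cons(s',a)\leq\enlev{d}{\histpr}$, so the first branch of \Cref{def:energy-level} applies and produces $\enlev{d}{\hist}=\enlev{d}{\histpr}-\cons(s',a)\geq\safe_\mdp(s')-\cons(s',a)\geq\safe_\mdp(t)$, preserving both parts of the invariant. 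In the reload case, \Cref{def:safeact} gives $\cons(s',a)\leq\Ca(\mdp)-\safe_\mdp(t)\leq\Ca(\mdp)$, which together with $\enlev{d}{\histpr}\neq\bot$ triggers the second branch of \Cref{def:energy-level}, yielding $\enlev{d}{\hist}=\Ca(\mdp)-\cons(s',a)\geq\safe_\mdp(t)$, again preserving the invariant.

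For the memoryless and polynomial-time claims, \Cref{def:safeact} already notes that every state with finite $\safe_\mdp$-value admits at least one safe action, and every action is safe in states where $\safe_\mdp=\infty$. Hence picking, in each state, an arbitrary safe action defines a memoryless strategy to which the argument above applies verbatim. For the running time, \Cref{thm:safety-main} delivers the vector $\safe_\mdp$ in polynomial time, after which enumerating state-action pairs and checking the two inequalities of \Cref{def:safeact} is a direct polynomial-time procedure, producing the desired memoryless witness. I do not anticipate a real obstacle here: \Cref{def:safeact} was tailored precisely so that the invariant carries through, and the only mild bookkeeping is keeping the reload and non-reload branches aligned with the two cases of the energy-level recurrence.
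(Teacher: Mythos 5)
Your proof is correct and follows essentially the same route as the paper, which simply asserts that the first part "follows directly" from Definitions \ref{def:safeact} and \ref{def:energy-level}; your explicit induction on history length, maintaining the invariant $\enlev{d}{\hist}\neq\bot$ and $\enlev{d}{\hist}\geq\safe_\mdp(\last{\hist})$, is precisely the argument the paper leaves implicit. The memoryless and complexity claims are handled exactly as in the paper, via the existence of a safe action in every state with finite $\safe_\mdp$-value and Theorem \ref{thm:safety-main}.
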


\begin{proof}
The first part of the theorem follows directly from
\Cref{def:safeact}, \Cref{def:energy-level} (resource levels), and from
definition of $d$-safe runs. The second part is a corollary of
\Cref{thm:safety-main} and the fact that in each state, the safe
strategy from \Cref{def:safeact} can fix one such action in each
state and thus is memoryless. The complexity follows from Theorem~\ref{thm:safety-main}. \qed
\end{proof}
\section{Positive Reachability}%
\label{sec:posreach}
In this section, we focus on strategies that are safe and such that at
least one run they produce visits a given set $\target\subseteq S$ of
\emph{targets}. 
The main contribution of this section is
Algorithm~\ref{algo:posreach} used to compute such strategies as well as the vector $ \SafePosReach[\mdp,\target] $ of minimal initial resource levels for which such a strategy exist. As before, for the rest of this section we fix a CMDP $ \mdp $.

We define a function $\SPRval[\mdp]\colon
\states\times\actions\times\extNset^\states \to \extNset$ ($ \mathit{SPR} $ for safe positive reachability) s.t. for all
$s\in\states, a\in\actions$, and $\vec{x}\in\extNset^\states$
we have
\[
\SPRval_\mdp(s, a, \vec{x}) = \cons(s, a) + 
  \min_{t\in\Succ(s, a)} \Big\{ 
    \max \left\{\veccomp{\vec{x}}{t} , \safe_\mdp(t^\prime) \mid
    t'\in \Succ(s,a), t' \neq t 
\right\} \Big\}
\]
The $\max$ operator considers, for given $t$, the value $\veccomp{\vec{x}}{t}$
 and the values needed to survive from all possible outcomes of
$a$ other than $t$. Let $v = \SPRval_\mdp(s,a,\vec{x})$ and $t$ the
outcome selected by $\min$. Intuitively, $v$ is the minimal amount of
resource needed to reach $t$ with at least $\veccomp{\vec{x}}{t}$
resource units, or survive if the outcome of $a$ is different from $t$.

We now define a functional whose fixed point characterizes $ \SPRval[\mdp,\target] $. We first define a two-sided version of the truncation operator from the previous section: the operator $ \trunc[\mdp]{\cdot} $ such that
\[
\veccomp{\trunc[\mdp]{\vec{x}}}{s} =
\begin{cases}
  \infty &
    \text{if }\veccomp{\vec{x}}{s} > \Ca(\mdp)\\
  \vec{x}(s) &
    \text{if }
    \veccomp{\vec{x}}{s} \leq \Ca(\mdp)$ and $ s \not \in \reloads\\
  0 & \text{if }\veccomp{\vec{x}}{s} \leq \Ca(\mdp)$ and $ s \in \reloads\\
\end{cases}
\]
Using the functions $\SPRval$ and $ \trunc[\mdp]{\cdot} $, we now define an auxiliary operator $\mathcal{A}$ and the main operator $\mathcal{B}$ as follows.

\begin{align*}
\veccomp{\mathcal{A}_{\mdp}(\vec{r})}{s} &= 
  \begin{cases}
    \mathit{Safe}_{\mdp}(s) & \text{if } s \in \target \\
    { \min_{a\in A} \left(\SPRval[\mdp](s,a,\vec{r}) \right) } &
      \text{otherwise};
  \end{cases}\\
\mathcal{B}_{\mdp}(\vec{r}) &= \trunc{\mathcal{A}_\mdp(\vec{r})}_\mdp
\end{align*}

\noindent Let $\SafePosReach^i$ be the
vector such that for a state $s\in\states$ the number
$d=\SafePosReach^i(s)$ is the minimal number $ \ell $ such that there exists a
strategy that is $\ell$-safe in $s$ and produces at least one run that
visits $\target$ within first $i$ steps. Further, we denote by $
\vec{y}_\target $ a vector such that
\[
\veccomp{\vec{y}_\target}{s} = 
\begin{cases}
	\mathit{Safe}_{\mdp}(s) & \text{if } s \in \target \\
	\infty & \text{if } s \not\in \target 
\end{cases}
\]

The following lemma can proved by a rather straightforward but technical induction.

\begin{lemma}
\label{lem:posreach-iterate}
Consider the iteration of $\mathcal{B}_{\mdp}$ on the initial vector $
\vec{y}_\target $. Then for each $ i \geq 0 $ it holds that
$ \mathcal{B}_{\mdp}^i(\vec{y}_\target) = \SafePosReach[\mdp,\target]^i $.
\end{lemma}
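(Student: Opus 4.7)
I would prove this by induction on $i$. The base case $i = 0$ is immediate: $\mathcal{B}_\mdp^0(\vec{y}_\target) = \vec{y}_\target$ by definition, and a direct unpacking of $\SafePosReach^0$ yields the same vector. Indeed, for $s \in \target$, the length-zero history already witnesses a visit to $\target$ within the first $0$ steps, so the only remaining requirement is $\ell$-safety, giving $\SafePosReach^0(s) = \safe(s)$; and for $s \notin \target$ no strategy can make $\target$ appear within zero steps, so $\SafePosReach^0(s) = \infty$.

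For the inductive step, I would assume $\mathcal{B}_\mdp^{i-1}(\vec{y}_\target) = \SafePosReach^{i-1}$ and show $\mathcal{B}_\mdp(\SafePosReach^{i-1})(s) = \SafePosReach^i(s)$ for each $s$, splitting on whether $s \in \target$. If $s \in \target$, then $\mathcal{A}_\mdp(\SafePosReach^{i-1})(s) = \safe(s)$ by the definition of $\mathcal{A}_\mdp$ and the subsequent truncation matches $\SafePosReach^i(s)$, because being already in $\target$ makes the $i$-step reachability requirement trivial and only safety at $s$ remains to be guaranteed. If $s \notin \target$, the value is $\mathcal{A}_\mdp(\SafePosReach^{i-1})(s) = \min_a \SPRval_\mdp(s, a, \SafePosReach^{i-1})$, and I would establish both inequalities against $\SafePosReach^i(s)$.

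For the upper bound $\SafePosReach^i(s) \leq \mathcal{B}_\mdp(\SafePosReach^{i-1})(s)$, I would pick an action $a^\ast$ and a successor $t^\ast \in \Succ(s, a^\ast)$ realising the minimum in $\SPRval_\mdp$, and stitch together a strategy $\sigma$ that plays $a^\ast$ at $s$, then follows a strategy $\sigma_{t^\ast}$ supplied by the inductive hypothesis (witnessing $\SafePosReach^{i-1}(t^\ast)$) from the outcome $t^\ast$, and falls back on the memoryless safe strategy of \Cref{thm:safety-strat} from every other outcome $t' \neq t^\ast$. The defining minimum in $\SPRval_\mdp$ guarantees that the resource level immediately after the first transition is at least $\SafePosReach^{i-1}(t^\ast)$ at $t^\ast$ and at least $\safe(t')$ at every other $t'$, so \Cref{lem:safe-path-extension} certifies that $\sigma$ is $\ell$-safe for $\ell = \SPRval_\mdp(s, a^\ast, \SafePosReach^{i-1})$ while reaching $\target$ within $i$ steps with positive probability. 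For the reverse inequality, I would take any $\ell$-safe strategy $\sigma$ from $s$ with $\probm{\sigma}{s}(\ReachRuns^i) > 0$, read off the first action $a$ it plays together with an outcome $t^\ast \in \Succ(s, a)$ through which some witnessing finite path runs, and observe that the post-transition resource level must both fund the witnessing continuation at $t^\ast$ (hence be at least $\SafePosReach^{i-1}(t^\ast)$ by the inductive hypothesis) and allow safe play from every other outcome $t'$ (hence be at least $\safe(t')$), which collectively forces $\ell \geq \SPRval_\mdp(s, a, \SafePosReach^{i-1}) \geq \min_a \SPRval_\mdp$.

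The main obstacle I anticipate is bookkeeping the truncation $\trunc{\cdot}_\mdp$, particularly at reload states: when $s \in \reloads$ the post-action resource level equals $\Ca(\mdp) - \cons(s, a)$ independently of the pre-action level, so once the bare requirement $\SPRval_\mdp(s, a, \cdot) \leq \Ca(\mdp)$ holds every non-exhausted initial load suffices --- which is precisely why $\trunc{\cdot}_\mdp$ collapses such values to $0$ --- whereas values exceeding $\Ca(\mdp)$ must be lifted to $\infty$ because no legal initial load can fund them. A secondary subtlety is the asymmetric $\min$--$\max$ structure inside $\SPRval_\mdp$: the witness strategy commits to one distinguished successor $t^\ast$ as the \emph{target carrier} while merely keeping every other successor safe, and I would verify that this asymmetry is preserved both when synthesising $\sigma$ above and when extracting $t^\ast$ from an existing witness in the converse direction.
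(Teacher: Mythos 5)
Your proposal is correct and follows essentially the same route as the paper's proof: induction on $i$, with the achievability direction proved by stitching the minimizing action $a^\ast$ to an inductive witness strategy at the distinguished successor $t^\ast$ and the memoryless safe strategy elsewhere, and the optimality direction proved by decomposing an arbitrary witness strategy along its first action and bounding the post-transition levels at $t^\ast$ and at the other outcomes. Your explicit treatment of the reload/truncation case (where the pre-action load is irrelevant and only $\SPRval \leq \Ca(\mdp)$ matters) matches the paper's handling of the $s\in\reloads$ case.
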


The following lemma says that iterating $ \mathcal{B}_{\mdp} $ reaches a fixed point in a polynomial number of iterations. Intuitively, this is because when trying to reach $ \target $, it doesn't make sense to perform a cycle between two visits of a reload state (as this can only increase the resource consumption) and at the same time it doesn't make sense to visit the same reload state twice (since the resource is reloaded to the full capacity upon each visit). The proof is straightforward and is omitted in the interest of brevity.

\begin{lemma}%
\label{lem:posreach-bound}%
Let $ K = |\reloads| + (|\reloads|+1)\cdot(|\states|-|\reloads|+1)$.
Taking the same initial vector $ \vec{y}_T $ as in
\Cref{lem:posreach-iterate}, we have
$\mathcal{B}_{\mdp}^{K}(\vec{y}_T) = \SafePosReach[\mdp,\target]$.
\end{lemma}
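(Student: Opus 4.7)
The plan is to invoke \Cref{lem:posreach-iterate} to identify $\mathcal{B}_\mdp^i(\vec{y}_T)$ with $\SafePosReach[\mdp,\target]^i$, so that the claim reduces to showing $\SafePosReach[\mdp,\target]^K(s) = \SafePosReach[\mdp,\target](s)$ for every state $s$. The sequence $(\SafePosReach[\mdp,\target]^i)_{i\geq 0}$ is pointwise non-increasing in $i$, since allowing more steps in which to visit $\target$ only enlarges the family of qualifying strategies; in particular $\SafePosReach[\mdp,\target]^K(s) \geq \SafePosReach[\mdp,\target](s)$ comes for free, and the task is the reverse inequality: whenever $\SafePosReach[\mdp,\target](s) = d$ is finite, exhibit a $d$-safe strategy from $s$ admitting a compatible run that first visits $\target$ within at most $K$ steps.

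To produce such a witness, I would start from any $d$-safe strategy $\sigma^{*}$ that admits some compatible run $\run$ reaching $\target$ for the first time at index $\ell$, and then shorten $\run\pref{\ell}$ by iterated pruning of cycles. Two kinds of cycles can be removed. First, if a reload state $r$ is visited at positions $i<j$ on $\run\pref{\ell}$, splice $\run\pref{i}$ with $\run\suff{j}$: both visits to $r$ refill the resource to $\Ca(\mdp)$, so the energy level at the splice point in the pruned run matches the level at position $j$ of the original. Second, inside a maximal sub-path of $\run\pref{\ell}$ that contains no reload state (either before the first reload, between consecutive reloads, or after the last reload and before hitting $\target$), any repeated non-reload state $t$ at positions $i<j$ gives a cycle that can be cut; because consumption is non-negative and no reload occurs in the segment, $\enlev{d}{\run\pref{i}} \geq \enlev{d}{\run\pref{j}}$, so the splice leaves the suffix with at least as much resource as before.

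To turn the pruned path into a run generated by an actual $d$-safe strategy, I would define $\sigma$ to play, along the pruned history, the action that $\sigma^{*}$ played at the corresponding (post-splice) position of $\run$; on histories that deviate from the pruned path, $\sigma$ falls back to the memoryless safe strategy guaranteed by \Cref{thm:safety-strat}. By construction $\sigma$ has a compatible run that follows the pruned path to $\target$; safety along this path holds because at each splice point the new resource level dominates the old one, so by \Cref{lem:safe-path-extension} the suffix stays $d$-safe, and safety off-path is immediate from the fallback. After both pruning rules have been applied exhaustively, $\run\pref{\ell}$ visits each reload state at most once (contributing at most $|\reloads|$ positions) and decomposes into at most $|\reloads|+1$ reload-free segments, each a simple path through $\states\setminus\reloads$ with one extra endpoint, hence of length at most $|\states|-|\reloads|+1$. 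Summing gives $\ell \leq |\reloads| + (|\reloads|+1)(|\states|-|\reloads|+1) = K$, as required.

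The main obstacle is the second step: making the splicing construction yield a genuine strategy rather than just a sequence of states, and verifying that the $d$-safety of $\sigma^{*}$ transfers to $\sigma$ despite the strategy's dependence on entire histories. Once this bookkeeping is done carefully, the length bound and hence the lemma follow from a purely combinatorial count of simple paths in a graph partitioned by the reload states.
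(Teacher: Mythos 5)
Your proposal is correct and follows essentially the same route as the paper: reduce via \Cref{lem:posreach-iterate} to bounding the number of steps needed for positive reachability, then shorten a witness history by cutting the two kinds of cycles (a repeated reload state, and a repeated state with no intermediate reload) and count at most $|\reloads|$ reload positions plus $|\reloads|+1$ repetition-free segments. The only difference is presentational — the paper takes a minimal-index strategy and derives a contradiction by redefining it on the suffixes of the spliced prefix, whereas you prune iteratively and fall back to the memoryless safe strategy off-path, which works since the post-splice resource level dominates the original one at every reachable deviation point.
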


The computation of $\SafePosReach[\mdp,\target]$ and of the associated
witness strategy is presented in Algorithm~\ref{algo:posreach}.

\begin{algorithm}[htb]
  \KwIn{CMDP $\mdp$ with states $S$, set of target states $\target\subseteq S$}
  \KwOut{The vector $\SafePosReach[\mdp,\target]$, coreresponding rule selector $\selector$}
  $\vec{r} \leftarrow \{\infty\}^S  $\; \label{aline:prinitb}
  \ForEach{$s\in \states$ s.t. $ \safe_{\mdp}(s)<\infty $}
  {
    $\selector(s)(\safe_{\mdp}(s)) \leftarrow 
    \text{arbitrary action safe in }s $
  }\label{aline:initmid}
  \lForEach{$t\in \target$}
  {
    $\veccomp{\vec{r}}{t} \leftarrow \veccomp{\safe_\mdp}{t}$
    \label{aline:prinite}
  }
  \Repeat{$\,\vec{r}_{\mathit{old}} = \vec{r}$}
  {\label{aline:prloopb}
    $ \vec{r}_\old \leftarrow \vec{r} $\;
    \ForEach{$ s \in \states\setminus \target $}
    {
      $ \veccomp{\vec{a}}{s} \leftarrow 
        \arg\min_{a\in \actions}
        \SPRval(s, a, {\vec{r}_\old})$\;\label{aline:prasel}
      $\veccomp{\vec{r}}{s} \leftarrow
        \min_{a\in \actions} 
        \SPRval(s, a, {\vec{r}_\old})  $\;
    }
    $  \vec{r} \leftarrow \trunc{\vec{r}}_{\mdp}$\; \label{aline:prtrunc}
    \ForEach{$ s\in \states\setminus \target $}
    {
      \If{$ \veccomp{\vec{r}}{s} < \veccomp{\vec{r}_{\old}}{s}$
      \label{aline:cond}}
      {
        $\selector(s)(\veccomp{\vec{r}}{s}) \leftarrow
        \veccomp{\vec{a}}{s} $\; \label{aline:prinsert}
      }
    }
  }\label{aline:prloope}
  \Return{$\vec{r},\selector$}
  \caption{Positive reachability of $\target$ in $\mdp$}
  \label{algo:posreach}
\end{algorithm}

\begin{theorem}%
\label{thm:posreach-values-main}%
The Algorithm~\ref{algo:posreach} always terminates after a polynomial number of steps, and upon
termination, $ \vec{r} = \SafePosReach[\mdp,\target]$.
\end{theorem}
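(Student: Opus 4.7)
The plan is to show that Algorithm~\ref{algo:posreach} faithfully implements the fixed-point iteration of the operator $\mathcal{B}_{\mdp}$ starting from $\vec{y}_{\target}$, and then invoke \Cref{lem:posreach-iterate,lem:posreach-bound} to conclude that the fixed point is $\SafePosReach[\mdp,\target]$ and is reached in polynomially many passes.

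The first step is to check the initialization. After lines~\ref{aline:prinitb}--\ref{aline:prinite}, the vector $\vec{r}$ equals $\vec{y}_{\target}$: every coordinate is initialized to $\infty$, and then $\vec{r}(t)$ is overwritten with $\safe_{\mdp}(t)$ for each $t \in \target$, matching the definition of $\vec{y}_{\target}$ verbatim. The precomputed vector $\safe_{\mdp}$, used both here and inside every call to $\SPRval$, is available in polynomial time by \Cref{thm:safety-main}.

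The second step is the loop invariant: at the end of the $i$-th pass of the repeat-loop, $\vec{r} = \mathcal{B}_{\mdp}^{i}(\vec{y}_{\target})$, proved by induction on $i$. For a non-target state $s$, the inner for-loop sets $\vec{r}(s) = \min_{a \in A}\SPRval(s, a, \vec{r}_{\old})$, matching $\mathcal{A}_{\mdp}(\vec{r}_{\old})(s)$, and the truncation on line~\ref{aline:prtrunc} then yields $\mathcal{B}_{\mdp}(\vec{r}_{\old})(s)$. For a target $t$, the for-loop leaves $\vec{r}(t)$ unchanged; by the inductive hypothesis its carry-over value is $\trunc{\safe_{\mdp}(t)}_{\mdp}$, which is preserved by the idempotent truncation on line~\ref{aline:prtrunc}, and this equals $\mathcal{B}_{\mdp}(\vec{r}_{\old})(t)$ because $\mathcal{A}_{\mdp}(\vec{v})(t) = \safe_{\mdp}(t)$ independently of $\vec{v}$. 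The base case $i = 1$ uses the fact that the first-pass truncation sends the initial $\safe_{\mdp}(t)$ to $\trunc{\safe_{\mdp}(t)}_{\mdp}$, which again matches $\mathcal{B}_{\mdp}(\vec{y}_{\target})(t)$.

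Finally, termination and complexity follow. By \Cref{lem:posreach-bound}, the iteration of $\mathcal{B}_{\mdp}$ reaches its fixed point within $K = |\reloads| + (|\reloads|+1)(|\states|-|\reloads|+1)$ steps, so the stopping condition $\vec{r}_{\old} = \vec{r}$ triggers after at most $K+1$ passes. Each pass performs $O(|\states| \cdot |A|)$ evaluations of $\SPRval$, each polynomial since it only inspects $\Succ(s,a) \subseteq \states$ and the precomputed vector $\safe_{\mdp}$. Thus the algorithm runs in polynomial time and returns $\vec{r} = \SafePosReach[\mdp,\target]$. The main subtlety I expect to justify carefully is the asymmetric treatment of target coordinates (never rewritten inside the inner for-loop) versus the uniform definition of $\mathcal{B}_{\mdp}$; this is reconciled precisely by the global, idempotent truncation on line~\ref{aline:prtrunc}.
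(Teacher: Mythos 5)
Your proposal is correct and follows essentially the same route as the paper: identify the initialization with $\vec{y}_{\target}$, observe that each pass of the repeat-loop applies $\mathcal{B}_{\mdp}$, and invoke Lemma~\ref{lem:posreach-bound} for both the value of the fixed point and the polynomial bound $K$ on the number of passes. You merely spell out the routine induction (including the idempotent-truncation point for target coordinates) that the paper's one-line proof leaves implicit.
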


\begin{proof}
The repeat loop on lines \ref{aline:prinitb}--\ref{aline:prinite}
initialize $\vec{r}$ to $\vec{y}_\target$. The repeat loop on lines
\ref{aline:prloopb}--\ref{aline:prloope} then iterates the operator
$\mathcal{B}$. By \Cref{lem:posreach-bound}, the iteration reaches a
fixed point in at most $ K $ steps, and this fixed point equals $
\SafePosReach[\mdp,\target]$. The complexity bound follows easily, since $ K $ is of polynomial magnitude.
\end{proof}

The most intricate part of our analysis is extracting a strategy that is $ \target $-positive $ \SafePosReach[\mdp,\target](s) $-safe in every state $ s $.

\begin{theorem}%
\label{thm:posreach-strat-main}%
Let $ \vec{v} = \SafePosReach[\mdp,\target] $. Upon termination of
Algorithm~\ref{algo:posreach}, the computed selector $ \selector $ has
the property that the finite counter strategy $ \selector^{\vec{v}} $
is, for each state $s\in\states$, $\target$-positive
$\veccomp{\vec{v}}{s}$-safe in $s$. That is, a polynomial-size finite
counter strategy for the positive reachability problem can be computed
in polynomial time.
\end{theorem}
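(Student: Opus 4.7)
The plan is to tie the selector $\Sigma$ constructed by the algorithm to the iterates of $\mathcal{B}_\mdp$, then establish safety and positive reachability separately. First I would show by induction on the loop counter that after $i$ executions of the repeat body, $\vec{r} = \mathcal{B}^i_\mdp(\vec{y}_\target)$ (as in \Cref{lem:posreach-iterate}) and $\vec{r} \ge \safe_\mdp$ componentwise throughout. Consequently $\dom(\Sigma(s))$ consists of $\safe_\mdp(s)$ (inserted at initialization) together with a strictly decreasing chain $n_1 > n_2 > \cdots > n_m = \vec{v}(s)$, where each $n_k$ is the new value of $\vec{r}(s)$ at the iteration $i_k$ when it strictly decreased. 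Crucially, the stored action $a_k = \Sigma(s)(n_k)$ realizes $\SPRval(s, a_k, \mathcal{B}^{i_k-1}_\mdp(\vec{y}_\target)) = n_k$; unfolding the definition of $\SPRval$ yields a designated \emph{target successor} $t^\ast \in \Succ(s, a_k)$ with the property that whenever $a_k$ is played from $s$ with current resource $\ell \ge n_k$, the post-step resource (equal to $\ell - \cons(s, a_k)$, or $\Ca(\mdp) - \cons(s, a_k)$ if $s \in \reloads$) is at least $\mathcal{B}^{i_k - 1}_\mdp(\vec{y}_\target)(t^\ast)$, while for every other successor $t$ of $a_k$ it is at least $\safe_\mdp(t)$.

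Using this characterization I would prove safety by induction on the length of a $\Sigma^{\vec{v}}$-compatible history $\hist$ from $s$ starting with load $\vec{v}(s)$: $\enlev{\vec{v}(s)}{\hist} \ne \bot$ and $\enlev{\vec{v}(s)}{\hist} \ge \safe_\mdp(\last{\hist})$. The induction step splits on which threshold the selector uses at the current resource level: when the threshold is $\safe_\mdp$, \Cref{def:safeact} gives the step; when it is some $n_k$, the invariants above yield $\ge \safe_\mdp(t)$ at every successor $t$, using $\mathcal{B}^{i_k - 1}_\mdp(\vec{y}_\target)(t^\ast) \ge \safe_\mdp(t^\ast)$ in the target branch. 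For positive reachability I would define the \emph{rank} of a pair $(s', \ell)$ as the least $i$ with $\mathcal{B}^i_\mdp(\vec{y}_\target)(s') \le \ell$; the initial pair $(s, \vec{v}(s))$ has finite rank at most $K$ (from \Cref{lem:posreach-bound}). One then verifies that at any $(s', \ell)$ with $s' \notin \target$, the threshold selected by $\Sigma$ is $\mathcal{B}^{\mathrm{rank}(s', \ell)}_\mdp(\vec{y}_\target)(s')$, and taking the target-successor branch of the stored action strictly decreases the rank. After at most $K$ such steps we obtain a $\Sigma^{\vec{v}}$-compatible finite path of positive probability reaching a state of rank $0$, and since $\vec{y}_\target(s') = \infty$ outside $\target$, rank $0$ forces $s' \in \target$; together with safety, this delivers $\target$-positive $\vec{v}(s)$-safety.

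The polynomial bounds are immediate from \Cref{thm:posreach-values-main}: the repeat loop runs for $\mathcal{O}(K)$ iterations with polynomial per-iteration cost, so $|\dom(\Sigma(s))| = \mathcal{O}(K)$ with each threshold of $\mathcal{O}(\log \Ca(\mdp))$ bits, yielding a polynomial-size counter selector computed in polynomial time. The main obstacle I anticipate is the safety induction step when $\ell$ lies in a high interval $[n_k, n_{k-1})$: the action $a_k$ was chosen only to witness positive reachability, so verifying that it is also safe requires carefully chasing $\vec{r} \ge \safe_\mdp$ through the definition of $\SPRval$ and separating the target-successor branch (where safety follows from the invariant at iteration $i_k - 1$) from the survival branches (where the $\SPRval$ formula already provides $\ge \safe_\mdp(t)$). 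The reload-state case, where the current resource is replaced by $\Ca(\mdp)$ before consumption is subtracted, is the subtle bit of bookkeeping.
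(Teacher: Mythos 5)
Your proposal is correct and follows essentially the same route as the paper: the safety half is exactly Lemma~\ref{lem:posreach-strat-fc-safety} combined with the invariant $\vec{r}\geq\safe_\mdp$ of \Cref{lem:pr-rinv}, and your rank-decreasing walk along designated successors is a repackaging of the paper's loop invariant (b), which maintains a $\selector^{\vec{r}}$-compatible witness path to $\target$ whose resource level never drops below $\vec{r}$. The only cosmetic difference is that you induct on the rank of (state, level) pairs with respect to the final selector rather than on the loop iteration, which trades the paper's ``the old witness path is still compatible with the redefined selector'' step for your claim that the threshold chosen at $(s',\ell)$ is $\mathcal{B}^{\mathrm{rank}(s',\ell)}_{\mdp}(\vec{y}_\target)(s')$.
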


\noindent The rest of this section is devoted to the proof
of~\Cref{thm:posreach-strat-main}. The complexity follows from \Cref{thm:posreach-values-main}. Indeed, since the algorithm has a polynomial complexity, also the size of $ \selector $ is polynomial. The correctness  proof is based on the following
invariant of the main repeat loop: the finite counter strategy $\pi = \selector^{\vec{r}} $
has these properties:
\begin{enumerate}[(a)]
\item We have that $\pi$ is $\safe_\mdp(s)$-safe in every state
$s\in\states$; in particular, we have for
$l=\min\{\vec{r}(s),\Ca(\mdp)\}$ that $\enlev{l}{\hist} \neq \bot$ for
every finite path $\hist$ produced by $\pi$ from $s$.%
\item For each state $s\in\states$ such that $\veccomp{\vec{r}}{s}
\leq \Ca(\mdp)$ there exists a $\pi$-compatible finite path
$\hist=s_1a_1s_2\ldots s_n$ such that $s_1=s$ and $s_n\in\target$ and
such that ``the resource level with initial load $\veccomp{\vec{r}}{s}$
never decreases below $\vec{r}$ along $ \hist$'', which means that for
each prefix $\hist\pref{i}$ of $\hist$ it holds
$\enlev{\veccomp{\vec{r}}{s}}{\hist\pref{i}}\geq
\veccomp{\vec{r}}{s_i}$.
\end{enumerate}

\noindent The theorem then follows from this invariant (parts (a) and
the first half of (b)) and from
\Cref{thm:posreach-values-main}. We start with the following
support invariant, which is easy to prove.

\begin{lemma}
\label{lem:pr-rinv} The inequality $ \vec{r}\geq \safe_{\mdp} $ is an
invariant of the main repeat-loop.
\end{lemma}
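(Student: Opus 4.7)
The plan is to prove the invariant by induction on the number of completed iterations of the main repeat-loop on lines~\ref{aline:prloopb}--\ref{aline:prloope}. For the base case, before the first iteration $\vec{r}(t) = \safe_\mdp(t)$ for every $t\in T$ and $\vec{r}(s) = \infty$ for every $s\notin T$, so the invariant holds trivially.

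For the inductive step, assume $\vec{r}_{\old} \geq \safe_\mdp$ at the start of an iteration; I will show that the inequality survives the inner update and the truncation on line~\ref{aline:prtrunc}. The pivotal intermediate claim is the pointwise lower bound
\[
\SPRval(s,a,\vec{r}_{\old}) \;\geq\; \cons(s,a) + \max_{t\in\Succ(s,a)}\safe_\mdp(t)
\]
for every $s$ and $a$, which follows immediately from $\vec{r}_{\old}(t)\geq\safe_\mdp(t)$: for every $t\in\Succ(s,a)$ the inner $\max$ in the definition of $\SPRval$ is bounded below by $\max_{t''\in\Succ(s,a)}\safe_\mdp(t'')$, a bound preserved by taking the outer minimum over $t$. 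Combining this with the standard recursion $\safe_\mdp(s) = \min_a\bigl(\cons(s,a) + \max_t \safe_\mdp(t)\bigr)$ for non-reload states (a direct consequence of $d$-safety and~\Cref{def:energy-level}) yields $\min_a \SPRval(s,a,\vec{r}_{\old}) \geq \safe_\mdp(s)$ for every $s \notin \reloads$, and the subsequent truncation can only replace such a value by $\infty$, so the invariant survives outside of $\reloads$.

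The main obstacle is the case $s \in \reloads$, where the truncation operator collapses any pre-truncation value in $[0,\Ca]$ down to $0$. I plan to argue that this is still consistent with the invariant because $\safe_\mdp(s) = 0$ whenever a safe action exists at a reload state. If $\vec{r}(s) = 0$ after truncation, some $a^*$ must satisfy $\SPRval(s,a^*,\vec{r}_{\old}) \leq \Ca$; by the lower bound above this forces $\cons(s,a^*) + \max_t \safe_\mdp(t) \leq \Ca$, i.e.\ $a^*$ is safe at the reload state $s$ in the sense of~\Cref{def:safeact}. Appealing to~\Cref{def:energy-level}, playing any action from a reload state immediately resets the resource level to $\Ca - \cons(s,a^*)$ regardless of the (non-$\bot$) initial load, so starting at $s$ with $d = 0$ and playing $a^*$ reaches each successor $t$ with energy $\Ca - \cons(s,a^*) \geq \safe_\mdp(t)$, from which a safe strategy continues. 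Hence $\safe_\mdp(s) = 0 = \vec{r}(s)$. Target states $t \in T$ are handled by the same two cases, because the inner loop does not modify $\vec{r}(t)$ and the truncation is governed by the identical argument. Without the reload-state semantics, the truncation from $[0,\Ca]$ all the way down to $0$ would appear incompatible with a lower-bound invariant of this shape, which is what makes this the technical crux of the proof.
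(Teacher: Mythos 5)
Your argument is the natural one (the paper omits this proof as "easy"), and it is essentially correct: the pointwise bound $\SPRval(s,a,\vec{r}_{\old})\geq \cons(s,a)+\max_{t\in\Succ(s,a)}\safe_{\mdp}(t)$ is exactly the right intermediate step, and your treatment of the reload case — observing that a post-truncation value of $0$ can only arise when some action is safe at $s$ in the sense of \Cref{def:safeact}, whence a $0$-safe strategy exists from the reload state $s$ by \Cref{lem:safe-path-extension} and so $\safe_{\mdp}(s)=0$ — is the genuine crux and is handled correctly.

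One step is stated too strongly. The "standard recursion" $\safe_{\mdp}(s)=\min_a\bigl(\cons(s,a)+\max_{t\in\Succ(s,a)}\safe_{\mdp}(t)\bigr)$ is not an unconditional equality for non-reload $s$: since initial loads are capped by $\Ca(\mdp)$, it can happen that the right-hand side is finite but exceeds $\Ca(\mdp)$, in which case $\safe_{\mdp}(s)=\infty$ and the equality (and hence your pre-truncation inequality $\min_a\SPRval(s,a,\vec{r}_{\old})\geq\safe_{\mdp}(s)$) fails. What is true, and what you need, is the one-sided bound $\safe_{\mdp}(s)\leq\cons(s,a)+\max_{t}\safe_{\mdp}(t)$ \emph{whenever that quantity is at most $\Ca(\mdp)$} (play $a$, then continue with a $\safe_{\mdp}(t)$-safe strategy, using \Cref{lem:safe-path-extension}). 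So the non-reload case should be split exactly as you split the reload case: if the pre-truncation value is at most $\Ca(\mdp)$, the minimizing action witnesses $\safe_{\mdp}(s)\leq\veccomp{\vec{r}}{s}$; if it exceeds $\Ca(\mdp)$, line~\ref{aline:prtrunc} sets $\veccomp{\vec{r}}{s}=\infty$ and the invariant holds vacuously. This is a one-line repair using an ingredient already present in your write-up, not a missing idea.
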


\paragraph{Proving part (a) of the main invariant.} We use the
following auxiliary lemma.
\begin{lemma}
\label{lem:posreach-strat-fc-safety} Assume that $ \selector $ is a
counter selector such that for all $s \in \states$ such that $
\safe(s)<\infty $:
\begin{compactenum}[(1.)]
\item $ \safe(s)\in \dom(\selector(s)) $.%
\item For all $ x \in \dom(\selector(s)) $, for $ a = \selector(s)(x)
$ and for all $ t\in \Succ(s,a) $ we have $\enlev{x}{sat} = d -
\cons(s,a)\geq \safe(t)$ where $d=x$ for $s\notin R$ and $d=\Ca(\mdp)$
otherwise.%
\end{compactenum}
Then for each vector $ \vec{y} \geq \safe$ the strategy $ \pi=\selector^{\vec{y}} $ is $ \safe (s)$-safe in every state $ s $.
\end{lemma}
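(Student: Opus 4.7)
The plan is to prove the safety claim by induction on the length of a $\pi$-compatible finite prefix $\hist$ of any run from a fixed state $s$ with $\safe(s)<\infty$, starting with initial load $\safe(s)$. The main inductive invariant is
\[
\enlev{\safe(s)}{\hist} \;\geq\; \safe(\last{\hist}),
\]
which in particular forces $\enlev{\safe(s)}{\hist}\neq\bot$ and is exactly the safety statement to prove. An auxiliary invariant carried along is that the current memory after $\hist$ equals $\enlev{\vec{y}(s)}{\hist}$; this follows by a routine induction comparing the defining recursion of $\upf$ together with $\meminit(s)=\vec{y}(s)$ against the definition of $\enlev{\cdot}{\cdot}$.

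The base case $\hist=s$ is immediate. For the inductive step at $\hist$ with $s'=\last{\hist}$, set $e=\enlev{\safe(s)}{\hist}$ and $m=\enlev{\vec{y}(s)}{\hist}$. By the IH, $e\geq\safe(s')$; since the two energy-level recursions are syntactically identical while $\vec{y}(s)\geq\safe(s)$, we also get $m\geq e$. The strategy plays $a=\selector(s')(n)$ where $n=\max\{x\in\dom(\selector(s')) : x\leq m\}$. By condition (1), $\safe(s')\in\dom(\selector(s'))$, and $\safe(s')\leq e\leq m$, so $n$ is well defined with $n\geq\safe(s')$. Condition (2) applied at $x=n$ yields $d-\cons(s',a)\geq\safe(t)$ for every successor $t$, with $d=n$ if $s'\notin\reloads$ and $d=\Ca(\mdp)$ otherwise. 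In the reload case the new real energy level is $\Ca(\mdp)-\cons(s',a)\geq\safe(t)$ and the step closes.

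The crux, which I expect to be the main obstacle, is the non-reload case: the new real energy level is $e-\cons(s',a)$, and one needs $e-\cons(s',a)\geq\safe(t)$, so one has to relate $e$ to $n$. A priori $n$ could exceed $e$, because the selector reads off the (possibly larger) memory $m=\enlev{\vec{y}(s)}{\hist}$ rather than $e$. My plan to handle this is to strengthen the induction to carry the additional fact that every action $\selector(s')(x)$ placed in $\dom(\selector(s'))$ with $x\geq\safe(s')$ is also safe at the \emph{real} level, i.e.\ its consumption is bounded by $e-\safe(t)$ whenever $e\geq\safe(s')$. For selectors produced by~\Cref{algo:posreach}, this extra structural property is implicit in the argmin construction and the way new domain entries are inserted, and can be recorded as a separate invariant of that algorithm. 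Once this property is available, the non-reload case closes by $e-\cons(s',a)\geq\safe(t)$, the induction goes through for every $\pi$-compatible prefix from $s$, and the lemma follows.
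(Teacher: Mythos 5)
Your identification of the crux is exactly right, but your proposed resolution does not close the gap, and in fact no argument from hypotheses (1.)--(2.) alone can: these conditions do not imply literal $\safe(s)$-safety of $\selector^{\vec{y}}$ when $\vec{y}(s)>\safe(s)$. Concretely, take a non-reload state $s$ with two actions $a_1,a_2$ of consumption $2$ and $8$, each leading deterministically to the same state $t$ that has a zero-cost action into a reload state (so $\safe(t)=0$ and $\safe(s)=2$), and put $\selector(s)=\{2\mapsto a_1,\ 8\mapsto a_2\}$. Conditions (1.) and (2.) both hold ($2-2\geq 0$ and $8-8\geq 0$), yet with $\vec{y}(s)=8\geq\safe(s)$ the strategy reads memory $8$, plays $a_2$, and the path $sa_2t$ is not $2$-safe. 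Consequently, your plan to rescue the non-reload case by importing an extra invariant of Algorithm~\ref{algo:posreach} cannot work as a proof of this lemma: it would turn a statement about arbitrary selectors satisfying (1.)--(2.) into a statement about the selectors produced by one particular algorithm, and you do not establish that invariant anyway.

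The paper's proof avoids the mismatch you found by running the same induction on the energy level initialized by $\vec{y}(s)$ rather than by $\safe(s)$: it shows $\bot\neq\enlev{\vec{y}(s)}{\fpath}\geq\safe(\last{\fpath})$ for every $\pi$-compatible $\fpath$ from $s$. Since the memory after $\fpath$ is exactly $\enlev{\vec{y}(s)}{\fpath}=l$, the domain element $x$ read off by the selector satisfies $x\leq l$ by construction, condition (2.) gives $\bot\neq\enlev{x}{t_1at_2}\geq\safe(t_2)$, and monotonicity of the level in the initial load gives $\enlev{l}{t_1at_2}\geq\enlev{x}{t_1at_2}$; your ``non-reload crux'' never arises. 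What this establishes is $\vec{y}(s)$-safety (the lemma's wording is stronger than what the hypotheses support), and that is precisely the form used downstream, e.g.\ in part (a) of the main invariant, where the guarantee is $\enlev{l}{\hist}\neq\bot$ for $l=\min\{\vec{r}(s),\Ca(\mdp)\}$, and in \Cref{thm:posreach-strat-main}. Rewrite your induction to track $\enlev{\vec{y}(s)}{\cdot}$ and the whole argument goes through with no appeal to Algorithm~\ref{algo:posreach}.
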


\begin{proof}
Let $ s $ be a state such that $ \veccomp{\vec{y}}{s} < \infty$. It
suffices to prove that for every $\pi$-compatible finite path $\fpath$
started in $s$ it holds $\bot\neq
\enlev{\veccomp{\vec{y}}{s}}{\fpath}$. We actually prove a stronger
statement: $\bot\neq \enlev{\veccomp{\vec{y}}{s}}{\fpath} \geq
\safe(\last{\fpath})$. We proceed by induction on the length of $
\fpath $. If $ \len{\fpath}=0 $ we have $ \enlev{\vec{y}(s)}{\fpath} =
\vec{y}(s) \geq \safe_{\mdp}(s)\geq 0 $. Now let $ \fpath = \histpr
\histconc t_1 a t_2 $ for some shorter path $\histpr$ with
$\last{\histpr}=t_1$ and $a\in \actions$, $t_1, t_2 \in\states$. By
induction hypothesis, $ l = \enlev{\vec{y}(s)}{\beta}\geq
\safe_{\mdp}(t_1) $, from which it follows that $\safe_{\mdp}(t_1) <
\infty$. Due to (1.), it follows that there exists at least one $x \in
\dom(\selector(t_1))$ such that $x \leq l$. We select maximal $x$
satisfying the inequality so that $a = \selector(t_1)(x)$. We have
that $\enlev{\vec{y}(s)}{\hist} = \enlev{l}{t_1at_2}$ by definition
and from (2.) it follows that $ \bot \neq \enlev{x}{t_1at_2} \geq
\safe(t_2)\geq 0$. All together, as $l \geq x$ we have that
$\enlev{\vec{y}(s)}{\hist} \geq \enlev{x}{t_1at_2} \geq \safe(t_2)\geq
0$. \qed%
\end{proof}

Now we prove the part (a) of the main invariant. We show that
throughout the execution of Algorithm 3, $\selector $ satisfies the
assumptions of \Cref{lem:posreach-strat-fc-safety}. Property (1.) is
ensured by the initialization on line~\ref{aline:initmid}. The
property (2.) holds upon first entry to the main loop by the
definition of a safe action (\Cref{def:safeact}). Now assume that $
\selector(s)(\vec{r}(s)) $ is redefined on line~\ref{aline:prinsert},
and let $ a $ be the action $ \vec{a}(s) $.

We first handle the case when $ s\not\in \reloads $. Since $ a $ was
selected on line~\ref{aline:prasel}, from the definition of $ \SPRval$
we have that there is  $ t \in \Succ(s,a) $ such that after the loop
iteration,
\begin{equation}
\label{eq:prinv-a} \vec{r}(s) = \cons(s,a) + \max\{\vec{r}_{\old}(t),
\safe(t') \mid t\neq t'\in\Succ(s,a)\} \geq \cons(s,a)+\max_{t'
\in \Succ(s,a)}\safe_{\mdp}(t'),
\end{equation}%
the latter inequality following from \Cref{lem:pr-rinv}. Satisfaction
of property (2.) in $ s $ then follows immediately from the equation
\eqref{eq:prinv-a}.

If $ s  \in \reloads $, then \eqref{eq:prinv-a} holds before the
truncation on line \ref{aline:prtrunc}, at which point $
\vec{r}(s)<\Ca(\mdp) $. Hence, $ \Ca(\mdp) -\cons(s,a)\geq \max_{t \in
\Succ(s,a)}\safe_{\mdp}(t) $ as required by (2.).
From \Cref{lem:posreach-strat-fc-safety,lem:pr-rinv} it follows that $
\Sigma^{\vec{r}} $ is $ \safe_{\mdp}(s) $-safe in every state $ s $.
This finishes the proof of part (a) of the invariant.

\paragraph{Proving part (b) of the main invariant.} Clearly, (b) holds
right after initialization. Now assume that an iteration of the main
repeat loop was performed. Denote $ \pi_{\old} $ denote the strategy $
\selector^{\vec{r}_{\old}} $ and by $ \pi $ the strategy
$\selector^{\vec{r}}$. Let $s$ be any state such that $
\veccomp{\vec{r}}{s} \leq \Ca(\mdp) $. If $ \veccomp{\vec{r}}{s} =
\veccomp{\vec{r}_{\old}}{s} $, then we claim that (b) follows directly
from the induction hypothesis: indeed, by induction hypothesis we have
that there is an $ s $-initiated $ \pi_{\old} $-compatible path $
\fpath $ ending in a target state s.t. the $
\veccomp{\vec{r}_{\old}}{s} $-initiated resource level along $ \fpath
$ never drops  $ \vec{r}_{\old} $, i.e. for each prefix $ \beta $ of $
\fpath $ it holds $ \enlev{\veccomp{\vec{r}_{\old}}{s}}{\beta}\geq
\vec{r}_{\old}{(\last{\beta})} $. But then $ \beta $ is also $ \pi
$-compatible, since for each state $ q $, $ \Sigma(q) $ was only
redefined for values smaller than $ \veccomp{\vec{r}_{\old}}{q} $.

The case when $ \veccomp{\vec{r}}{s} < \veccomp{\vec{r}_{\old}}{s}$ is
treated similarly. As in the proof of part (a), denote by $ a $ the
action $ \veccomp{\vec{a}}{s} $ assigned on line \ref{aline:prinsert}.
There must be a state $ t \in \Succ(s,a) $ s.t. \eqref{eq:prinv-a}
holds before the truncation on line \ref{aline:prtrunc}. In
particular, for this $ t $ it holds $ \enlev{\veccomp{\vec{r}}{s}}{sat} \geq
\vec{r}_{\old}(t)$. By induction hypothesis, there is a $ t
$-initiated $ \pi_{\old} $-compatible path $ \histpr $ ending in $
\target $ satisfying the conditions in (b). We put $ \fpath = s a
t\histconc\histpr $. Clearly $\fpath$ is $ s $-initiated and reaches $
\target $. Moreover, it is $ \pi $-compatible. To see this, note that 
$\selector^{\vec{r}}(s)(\veccomp{\vec{r}}{s}) = a$; moreover, the
resource level after the first transition is $ e(t) =
\enlev{\veccomp{\vec{r}}{s}}{sat} \geq \vec{r}_{\old}(t) $, and due to
the assumed properties of $ \histpr $, the $ \vec{r}_{\old}(t)
$-initiated resource level (with initial load $e(t)$) never decreases
below $ \vec{r}_{\old} $ along $\histpr$. Since $ \selector $ was only
re-defined for values smaller than those given by the vector $
\vec{r}_{\old} $, $ \pi $ mimics $ \pi_{\old} $ along $ \histpr $.
Since $ \vec{r} \leq \vec{r}_{\old} $, we have that along $ \fpath $,
the $ \veccomp{\vec{r}}{s} $-initiated resource level never decreases
below $ \vec{r} $. This finishes the proof of part (b) of the
invariant and thus also the proof of \Cref{thm:posreach-strat-main}\qed
\section{Büchi}
\label{sec:buchi}

This section proofs \Cref{thm:intro-main} which is the main
theoretical result of the paper. The proof is broken down into the
following steps.
\begin{enumerate}[(1.)]
\item We identify a largest set $\reloads^\prime\subseteq\reloads$ of
reload states such that from each $r\in\reloads^\prime$ we can reach
$\reloads^\prime$ again (in at least one step) while consuming at most
$\Ca$ resource units and restricting ourselves only to strategies that (i) avoid
$\reloads\setminus\reloads^\prime$ and (ii) guarantee
positive reachability of $\target$ in $\mdp(\reloads^\prime)$.%
\item We show that $\SafeBuchi[\mdp,\target]=\SafePosReach[\mdp(\reloads^\prime),\target]
$ and that the corresponding strategy
(computed by Algorithm~\ref{algo:posreach}) is also $\target$-Büchi
$\veccomp{\SafeBuchi[\mdp,\target]}{s}$-safe for each $s\in\states$.%
\end{enumerate}

Algorithm~\ref{algo:buchi-rel} solves (1.) in a similar fashion as
Algorithm~\ref{algo:initload} handled safety. In each iteration, we declare all states from which positive reachability of $T$ and safety within $\mdp(\relvar)$ cannot be guaranteed as non-reloading. This is repeated until we reach a fixed point. The number of iterations is clearly
bounded by $|\reloads|$.

\begin{algorithm}[htb]
\caption{Almost-sure Büchi reachability of $\target$ in $\mdp$.}
\label{algo:buchi-rel}%
\KwIn{CMDP $ \mdp = (\states, \actions, \trans, \cons, \reloads,
\Ca)$, target states $\target\subseteq S$}%
\KwOut{The largest set $\relvar\subseteq\reloads$ such that
$\veccomp{\SafePosReach[\mdp(\relvar),\target]}{r} \leq \Ca$ for all
$r\in\relvar$.}%
$\relvar \leftarrow \reloads $; $\varToRemove \leftarrow \emptyset$\;%
\Repeat{$\varToRemove = \emptyset$}%
{%
  $\relvar \leftarrow \relvar\smallsetminus \varToRemove $\; $
  (\vec{reach},\selector) \leftarrow \SafePosReach[\mdp(\relvar),\,\target]$\; $
  \varToRemove \leftarrow \{ r \in \relvar \mid
  \veccomp{\vec{reach}}{r} > \Ca \}$\; }%
\Return{$\vec{reach},\selector$}
\end{algorithm}

\begin{theorem}\label{thm:buchi}%
Let $\mdp = (\states, \actions, \trans, \cons, \reloads, \Ca)$ be a
CMDP and $\target\subseteq\states$ be a target set. Moreover, let $\reloads^\prime$ be the contents of $ \relvar $
upon termination of Algorithm~\ref{algo:buchi-rel} for the input $\mdp$ and
$\target$. Finally let $\vec{r}$ and $\selector$ be the vector
 and the selector
returned by
Algorithm~\ref{algo:posreach} for the input $\mdp$ and
$\target$. Then for every state $ s $, the finite counter strategy
$\sigma=\selector^\vec{r}$ is $T$-Büchi $\veccomp{\vec{r}}{s}$-safe in
$s$ in both $\mdp(\reloads^\prime)$ and $\mdp$. Moreover, the vector
$\vec{r}$ is equal to $\SafeBuchi[\mdp, \target]$.
\end{theorem}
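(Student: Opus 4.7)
The plan is to establish $\vec{r} = \SafeBuchi[\mdp,\target]$ via two matching inequalities. The upper bound, i.e., that $\sigma = \selector^{\vec{r}}$ is $\target$-B\"uchi $\vec{r}(s)$-safe in $s$ in both $\mdp$ and $\mdp(\reloads')$, I first establish in $\mdp(\reloads')$ and then transfer to $\mdp$ via the observation that declaring additional states as reloads can only help keep the resource non-negative, so both safety and the set of $\sigma$-compatible runs (together with their induced probabilities) carry over from $\mdp(\reloads')$ to $\mdp$. Safety of $\sigma$ in $\mdp(\reloads')$ is immediate from~\Cref{thm:posreach-strat-main} applied to $\mdp(\reloads')$.

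For the almost-sure B\"uchi property of $\sigma$ in $\mdp(\reloads')$, the argument rests on two observations. First, by the decreasing-CMDP assumption combined with safety in $\mdp(\reloads')$, no $\sigma$-compatible run can avoid $\reloads'$ indefinitely: any infinite suffix avoiding $\reloads'$ would eventually have to repeat a simple cycle of strictly positive consumption without reloading, exhausting the resource. This bounds the length of each inter-$\reloads'$ epoch by some $N$ polynomial in $|\states|$ and $\Ca$. Second, from any $r \in \reloads'$ with post-reload level $\Ca$, I will show that $\sigma$ reaches $\target$ within $O(|\states|)$ steps with positive probability. The key structural observation is that, by the incremental construction of $\selector$ in~\Cref{algo:posreach}, the largest entry $w^{(q)}$ of $\dom(\selector(q))$ equals the first finite value $\vec{r}(q)$ attained during the iteration, and the corresponding selector action witnesses a ``one-step-closer'' transition: there is $t^{(q)} \in \Succ(q, \selector(q)(w^{(q)}))$ such that $w^{(t^{(q)})}$ is strictly smaller and the residual load after the transition is at least $w^{(t^{(q)})}$ (with $w^{(t)} = \safe(t)$ for $t \in \target$). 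Iterating from $r$ with load $\Ca \geq w^{(r)}$ produces a $\sigma$-compatible finite path ending in $\target$ of length at most $|\states|$, yielding a uniform per-epoch success probability $p = \min_{r \in \reloads'} p_r > 0$. A conditional Borel--Cantelli argument then gives $\probm{\sigma}{s}(\BuchiRuns) = 1$.

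For the lower bound $\SafeBuchi[\mdp,\target] \geq \vec{r}$, I would proceed by induction on the iteration index $k$ of~\Cref{algo:buchi-rel}, proving that under any $\target$-B\"uchi $d$-safe strategy $\tau$ from $s$, almost every $\tau$-compatible run visits $\reloads \setminus \reloads_k$ only finitely often, where $\reloads_k$ denotes the value of $\relvar$ after $k$ iterations. The base case is vacuous. In the inductive step, if with positive probability some $r \in \reloads_k \setminus \reloads_{k+1}$ were visited infinitely often by $\tau$-runs, then combining with the inductive hypothesis for $\reloads_k$ and the Strong Markov property would yield a safe sub-strategy from $r$ with reload-level $\Ca$ operating in $\mdp(\reloads_k)$ that reaches $\target$ with positive probability, contradicting $\SafePosReach[\mdp(\reloads_k),\target](r) > \Ca$, which is precisely the condition that removed $r$ at iteration $k+1$. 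Once the claim is established for the final $\reloads'$, the tail of any B\"uchi-winning $\tau$ witnesses a $\target$-positive safe strategy in $\mdp(\reloads')$ from some random post-stabilization state, and an energy-accounting argument on the initial prefix of $\tau$ translates this into $d \geq \SafePosReach[\mdp(\reloads'),\target](s) = \vec{r}(s)$.

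The main obstacle will be the last accounting step of the lower bound: converting ``the tail of $\tau$ certifies positive reachability in $\mdp(\reloads')$ from a random post-stabilization state'' into the sharp bound $d \geq \vec{r}(s)$ at the original state $s$ and initial load $d$. This requires reconciling the absolute-safety notion of $d$-safe strategies with the almost-sure nature of the inductive claim, and chaining the initial prefix of $\tau$ back to positive-reachability values in $\mdp(\reloads')$ via monotonicity invariants of~\Cref{algo:posreach,algo:buchi-rel}.
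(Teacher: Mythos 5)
Your upper-bound argument is essentially the paper's: safety comes from \Cref{thm:posreach-strat-main}, the decreasing assumption together with safety forces infinitely many visits to $\reloads'$, each visit yields a uniform probability $\delta>0$ of a later visit to $\target$, and a Borel--Cantelli/repetition argument finishes; your explicit walk down the selector's domain elements is a more detailed route to the $\delta$ that the paper obtains by simply invoking finite memory. The lower bound, however, contains a genuine gap. Your inductive claim --- that under a $\target$-Büchi $d$-safe $\tau$, almost every run visits $\reloads\setminus\reloads_k$ only \emph{finitely} often --- is too weak for the contradiction you want to derive. The condition that removes $r$ at iteration $k{+}1$ is $\veccomp{\SafePosReach[\mdp(\reloads_k),\target]}{r}>\Ca$, i.e.\ no strategy is $\Ca$-safe in $r$ \emph{in $\mdp(\reloads_k)$} while positively reaching $\target$; and safety in $\mdp(\reloads_k)$ is a \emph{sure} property quantifying over all compatible runs. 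The residual strategy you extract after the (random) last visit to $\reloads\setminus\reloads_k$ is not a strategy of $\mdp(\reloads_k)$ in this sense: it may still assign positive probability to reaching $\reloads\setminus\reloads_k$ and rely on reloading there to stay alive, so it does not witness $\veccomp{\SafePosReach[\mdp(\reloads_k),\target]}{r}\le\Ca$, and no amount of conditioning on the tail event removes those branches. This is precisely the ``reconciling absolute safety with almost-sure claims'' obstacle you flag at the end; as set up, it does not resolve.

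The repair is to prove the stronger, non-probabilistic claim: a $\target$-Büchi safe strategy \emph{never} visits $\reloads\setminus\reloads_k$ on any compatible run (equivalently, $\veccomp{\SafeBuchi[\mdp,\target]}{r}=\infty$ for all $r\in\reloads\setminus\reloads_k$, by induction on $k$). Since every finite compatible history has positive probability, if some $\tau$-history reached $r\in\reloads_k\setminus\reloads_{k+1}$ then the residual strategy at $r$ would itself be $\target$-Büchi $e$-safe for some $e\le\Ca$; by the induction hypothesis it never touches $\reloads\setminus\reloads_k$, hence it is a genuine $e$-safe, $\target$-positive strategy in $\mdp(\reloads_k)$, contradicting the removal of $r$. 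This also dissolves your final accounting step: once no run of $\tau$ meets $\reloads\setminus\reloads'$, $\tau$ itself is $d$-safe and $\target$-positive in $\mdp(\reloads')$ from $s$, giving $d\ge\veccomp{\SafePosReach[\mdp(\reloads'),\target]}{s}=\veccomp{\vec{r}}{s}$ directly. The paper's own proof is a path-chasing rendition of exactly this ``never visits'' argument.
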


\begin{proof}
We first show that $\sigma$ is $T$-Büchi $\veccomp{\vec{r}}{s}$-safe
in $\mdp(\reloads^\prime)$ for all $s\in\states$ with
$\veccomp{\vec{r}}{s} \leq \Ca$. Clearly it is $ \veccomp{\vec{r}}{s} $-safe, so it remains to prove that $ T $ is visited infinitely often with probability 1.
We know that upon every visit of a state $ r\in \reloads' $, $ \sigma $ guarantees a future visit to $ T $ with positive probability. As a matter of fact, since $ \sigma  $ is a finite memory strategy, there is $ \delta>0 $ such that upon every visit of some $ r\in \reloads' $, the probability of a future visit to $ T $ is at least $ \delta $.
 As $\mdp(\reloads^\prime)$ is
decreasing, every $ s $-initiated $ \sigma $-compatible run must
visit the set $\reloads^\prime$ infinitely many times. 
Hence, with probability 1 we reach $ T $ at least once. The argument can then be repeated from the first point of visit to $ T $ to show that with probability 1 ve visit $ T $ at least twice, three times, etc. \emph{ad infinitum.} By the  monotonicity of probability, $ \probm{\sigma}{\mdp,s}(\BuchiRuns[T]) =1$.

It remains to show that $ \vec{r}\leq \SafeBuchi[\mdp,T] $. Assume that there is a state $s\in\states$ and a strategy
$\sigma^\prime$ such that $\sigma^\prime$ is $d$-safe in $s$ for some
$d<\veccomp{\vec{r}}{s} = \veccomp{\SafePosReach[\mdp(\reloads'),
\target]}{s}$. We show that this strategy is not $T$-Büchi $d$-safe in
$\mdp$. If all $\sigma^\prime$-compatible runs reach $\target$, then there
must be at least one history $\hist$ produced by $\sigma^\prime$ that
visits $r\in\reloads\setminus\reloads^\prime$ before reaching
$\target$ (otherwise $d\geq \veccomp{\vec{r}}{s}$). Then either (a)
$\veccomp{\SafePosReach[\mdp, \target]}{r} = \infty$, in which case any $ \sigma'$-compatible extension of $ \hist $ avoids $ T $; or (b) since $ \veccomp{\SafePosReach[\mdp(\reloads'), \target]}{r} >\Ca $, there must be an extension of $ \hist $ that visits, between the visit of
$r$ and $\target$, another $r^\prime \in
\reloads\setminus\reloads^\prime$ such that $r^\prime \neq r$. We can then repeat the argument, eventually reaching the case (a) or running out of the resource, a contradiction with $\sigma^\prime$ being $d$-safe. \qed%
\end{proof}

We can finally proceed to prove \Cref{thm:intro-main}. 

\begin{proof}[of \Cref{thm:intro-main}]
The theorem follows immediately from \Cref{thm:buchi} since we can
(1.) compute $\SafeBuchi[\mdp, \target]$ and the
corresponding strategy $\sigma_\target$ in polynomial time (see
\Cref{thm:posreach-strat-main} and Algorithm~\ref{algo:buchi-rel}),
(2.) we can easily check whether $d\geq\veccomp{\SafeBuchi[\mdp,
\target]}{s}$, if yes, than $\sigma_\target$ is the desired strategy
$\sigma$ and (3.) represent $\sigma_T$ in polynomial space as it is a
finite counter strategy represented by a polynomial-size counter selector.
\qed
\end{proof}

\section{Implementation and Case Studies}
\label{sec:exp}

We have implemented the presented algorithms in Python in a tool
called \emph{FiMDP (Fuel in MDP)} available at
\url{https://github.com/xblahoud/FiMDP}. The docker artifact is
available at \url{https://hub.docker.com/r/xblahoud/fimdp} and can be run without installation via the
Binder project~\cite{binder.18.scipy}. 
We investigate the practical behavior of our algorithms using two case
studies: (1) An autonomous electric vehicle (AEV) routing problem in
the streets of Manhattan modeled using realistic traffic and electric
car energy consumption data, and (2) a multi-agent grid world model
inspired by the Mars Helicopter Scout \cite{balaram2018mars} to be
deployed from the planned Mars 2020 rover. The first scenario
demonstrates the utility of our algorithm for solving real-world
problems~\cite{Zhaetal19}, while the second scenario reaches the
algorithm's scalability limits.

The consumption-Büchi objective can be also solved by a naive approach
that encodes the energy constraints in the
state space of the MDP, and solves it using techniques for standard
MDPs~\cite{Alfaro:thesis}. States of such an MDP are tuples $(s, e)$ where $s$ is a state
of the input CMDP and $e$ is the current level of energy. Naturally,
all actions that would lead to states with $e < 0$ lead to a special
sink state. The standard techniques rely on decomposition of the MDP
into maximal end-components (MEC). We implemented the explicit
encoding of CMDP into MDP, and the MEC-decomposition algorithm.

All computations presented in the following were performed on a PC
with Intel Core i7-8700 3.20GHz 12 core processor and a RAM of
16 GB running Ubuntu 18.04 LTS. All running times are means from
at least 5 runs and the standard deviation was always below
5\% among these runs.

\subsection{Electric Vehicle Routing}

\begin{wrapfigure}{R}{.45\textwidth}
\vspace{-1cm}
\centering
\includegraphics[width=.45\textwidth,trim={.5cm 1cm .5cm 1cm}, clip]%
{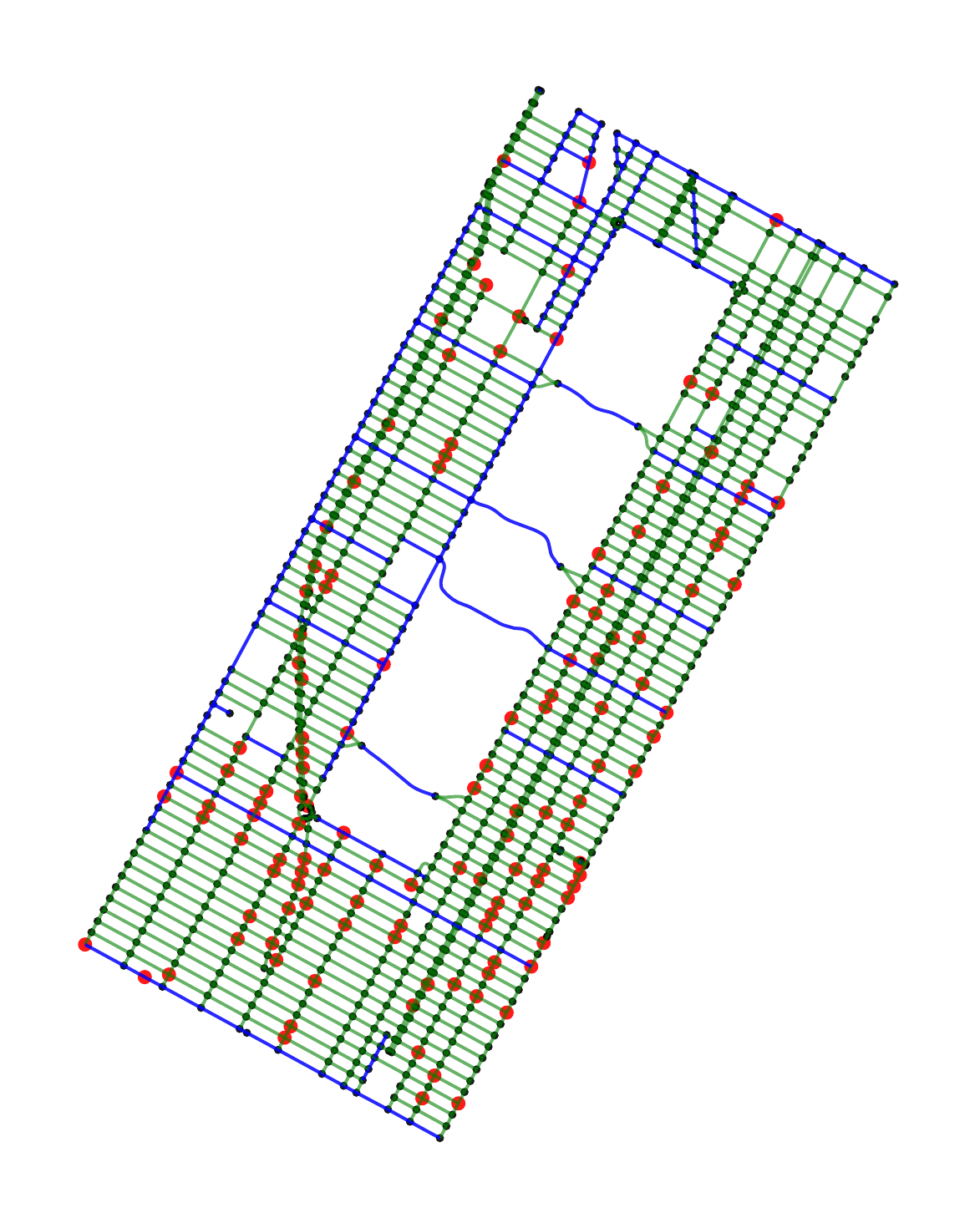}
\caption{Street network in the considered area. Charging stations are red, one way roads green, and two-way roads blue.} 
\label{fig:manhattan}
\end{wrapfigure}

We consider the area in the middle of Manhattan, from 42nd to 116th
Street, see Fig. \ref{fig:manhattan}. Street intersections and directions of feasible movement form the state and action spaces of the MDP. Intersections in the proximity of real-world fast charging stations~\cite{evdatabase} represent the set of reload states.

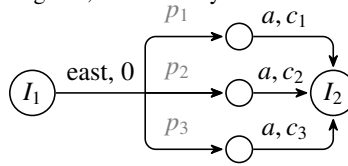
\begin{wrapfigure}[8]{r}{.45\textwidth}
\vspace{-1.8cm}
\centering
  \begin{tikzpicture}[automaton]
  \tikzstyle{action} = [circle, minimum size=0pt, draw=none]
  \tikzstyle{mstate} = [state, minimum size=10pt, inner sep=0pt, text width=0]
  \tikzstyle{consumption} = [black]
    \node[state] (m) at (2,1) {$I_1$};
    \node[state] (e) at (6,1) {$I_2$};
    
    \node[action] (ae)at (3.5,1) {};
    \path[above, shorten >=0pt] (m) edge node {east, $0$}(ae.center);
    \node[mstate](e0) at (4.75,.25) {};
    \node[mstate](e1) at (4.75,1) {};
    \node[mstate](e2) at (4.75,1.75) {};

      \path[->]
      (ae.center) edge[to path={|- 
          node[above, inner sep=-3pt, pos=.7, prob] {$p_3$} 
          (\tikztotarget)}, rounded corners]
      (e0)
      (e0) edge[to path={-| node[pos=.2,above]{$a, c_3$}(\tikztotarget)}, rounded corners] (e)
      
      (ae.center) edge node[above, inner sep=-3pt, pos=.4, prob] {$p_2$} (e1)
      (ae.center) edge[to path={-- (\tikztotarget)}, rounded corners] (e1)
      (e1) edge[] node[above] {$a, c_2$}(e)
      
      (ae.center) edge[to path={|- 
          node[above, inner sep=-3pt, pos=.7, prob] {$p_1$} 
          (\tikztotarget)}, rounded corners]
      (e2)
      (e2) edge[to path={-| node[pos=.2,above]{$a, c_1$} (\tikztotarget)}, rounded corners] (e);
  \end{tikzpicture}
\caption{Transition from intersection $I_1$ to $I_2$ with stochastic consumption. The small circles are dummy states.}
\label{fig:discretize}
\end{wrapfigure}

After the AEV picks a direction, it reaches the next intersection in
that direction deterministically with a stochastic energy consumption. We base our model of consumption on distributions
of vehicle travel times from the area \cite{Uber:2019} and conversion
of velocity and travel times to energy consumption \cite{Tesla:2008}.
We discretize the consumption distribution into three possible values
($c_1, c_2, c_3$) reached with corresponding probabilities ($p_1, p_2,
p_3$). The transition from one intersection ($I_1$) to another ($I_2$)
is then modelled using three dummy states as explained in
Fig.~\ref{fig:discretize}.

In this fashion, we model the street network of Manhattan as a CMDP with with $7378$ states and $8473$ actions. For a
fixed set of 100 randomly selected target states,
Fig.~\ref{fig:nyc-plots} shows influence of requested capacity on
running times for \textbf{(a)} strategy for Büchi objective using CMDP
(our approach), and \textbf{(b)} MEC-decomposition for the
corresponding explicit MDP. 
We can see from the plots that our algorithm runs
reasonably fast for all capacities (it stabilizes for $\Ca > 95$), it
is not the case for the explicit approach. The running times for
MEC-decomposition is dependent on the numbers of states and actions in
the explicit MDP, which keep growing. The number of states of the
explicit MDP for capacity 95 is 527475, while it is still only 7378 in
the original CMDP. Also note that actually solving the Büchi objective
in the explicit MDP requires computing almost-sure reachability of
MECs with some target states. Therefore, we can expect that even for
small capacities our approach would outperform the explicit one
(Fig.~\ref{fig:nyc-plots} (c)).

\begin{figure}[hbt]
\centering \scalebox{.7}{
%
\definecolor{mycolor1}{rgb}{0.00000,0.44700,0.74100}%
\definecolor{mycolor2}{rgb}{0.74100,0.00000,0.44700}%
\begin{tikzpicture}
\def\plotsize{1.1in}

\begin{axis}[%
width=\plotsize,
height=\plotsize,
at={(0,0)},
scale only axis,
restrict x to domain=10:195,
xmin=0,
xmax=200,
xlabel style={font=\color{white!15!black}},
xlabel={capacity},
ymin=0,
ylabel style={font=\color{white!15!black}},
ylabel={comp time (sec)},
axis background/.style={fill=white},
title style={font=\bfseries},
title={(a) CMDP},
legend style={legend cell align=left, align=left, draw=white!15!black}
]
\addplot[only marks, mark=*, mark options={}, mark size=1.5000pt, color=mycolor1, fill=mycolor1] table[col sep=comma,x index=1, y index=2] {data/comptime_diffcaps.csv};

\end{axis}

\begin{axis}[%
width=\plotsize,
height=\plotsize,
at={(4cm,0)},
scale only axis,
restrict x to domain=10:210,
xmin=0,
xmax=200,
xlabel style={font=\color{white!15!black}},
xlabel={capacity},
ymin=0,
axis background/.style={fill=white},
title style={font=\bfseries},
title={(b) explicit},
legend style={legend cell align=left, align=left, draw=white!15!black}
]
\addplot[only marks, mark=x, mark options={}, mark size=1.5000pt, color=mycolor2, fill=mycolor2]table[col sep=comma,x index=1, y index=3] {data/comptime_explicit.csv};

\end{axis}

\begin{axis}[%
width=\plotsize,
height=\plotsize,
at={(8cm,0)},
scale only axis,
restrict x to domain=10:210,
xmin=0,
xmax=60,
xlabel style={font=\color{white!15!black}},
xlabel={capacity},
ymin=0,
axis background/.style={fill=white},
title style={font=\bfseries},
title={(c) combined},
legend style={legend cell align=left, align=left, draw=white!15!black, font=\scriptsize},
legend columns=1, legend pos=north west,
legend entries={CMDP, MEC-decomp.}
]
\addplot[only marks, mark=*, mark options={}, mark size=1.5000pt, color=mycolor1, fill=mycolor1] table[col sep=comma,x index=1, y index=2] {data/comptime_diffcaps.csv};

\addplot[only marks, mark=x, mark options={}, mark size=1.5000pt, color=mycolor2, fill=mycolor2]table[col sep=comma,x index=1, y index=3] {data/comptime_explicit.csv};

\end{axis}

%

\pgfplotsset{scaled y ticks=false}
\end{tikzpicture}
  \caption{Mean computation times for a fixed target set of size 100
  and varying capacity: \textbf{(a) CMDP} -- computating Büchi
  objective via CMDP, \textbf{(b) explicit} -- computating MEC
  decomposition of the explicit MDP,  \textbf{(c) combined} --
  \textbf{(a)} and \textbf{(b)} combined for small capacity values.}
  \label{fig:nyc-plots}
\end{figure}
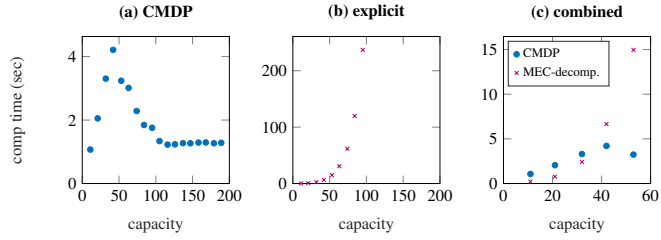

%

\subsection{Multi-agent Grid World}

We use multi-agent grid world to generate CMDP with huge number of
states to reach the scalability limits of the proposed algorithms. We
model the rover and the helicopter of the Mars 2020 mission with the
following realistic considerations: the rover enjoys infinite energy
while the helicopter is restricted by batteries recharged at the
rover. These two vehicle jointly operate on a mission where the
helicopter reaches areas inaccessible to the rover. The outcomes of
the helicopter's actions are deterministic while those of the rover
--- influenced by terrain dynamics --- are stochastic. For a grid
world of size $n$, this system can be naturally modeled as a CMDP with
$n^4$ states. Fig. \ref{fig:mars-plots} shows the running times of the Büchi objective for growing grid sizes and capacities in CMDP. It also shows the running time for the MEC decomposition of the corresponding explicit MDP when the capacity is 10. We observe that the increase in the computational time of CMDP follows the growth in the number of states roughly linearly, and our implementation deals with an MDP with $1.6\times 10^5$ states in no more than seven minutes.
\begin{figure}[hbt]
	\centering \scalebox{.7}{
%
\definecolor{mycolor1}{rgb}{0.00000,0.44700,0.74100}%
\definecolor{mycolor2}{rgb}{0.74100,0.00000,0.44700}%
\definecolor{mycolor3}{rgb}{0.3000,0.64100,0.24700}%
\definecolor{mycolor4}{rgb}{1.0000,0.31100,0.0000}%
\begin{tikzpicture}
\def\plotsize{1.1in}

\begin{axis}[%
width=\plotsize,
height=\plotsize,
at={(0,0)},
scale only axis,
restrict x to domain=0:25,
xmin=0,
xmax=22,
xlabel style={font=\color{white!15!black}},
xlabel={grid size},
ymin=0,
ylabel style={font=\color{white!15!black}},
ylabel={comp time (sec)},
axis background/.style={fill=white},
title style={font=\bfseries},
title={(a) CMDP},
legend style={legend cell align=left, align=left, draw=white!15!black, font=\scriptsize},
legend columns=1, legend pos=north west,
legend entries={cap = 10,cap = 50,cap = 100}
]
\addplot[only marks, mark=*, mark options={}, mark size=1.2000pt, color=mycolor1, fill=mycolor1] table[col sep=comma,x index=1, y index=3] {data/mars_comptime_diffgs_cap10.csv};
\addplot[only marks, mark=triangle*, mark options={}, mark size=1.5000pt, color=mycolor3, fill=mycolor3]table[col sep=comma,x index=1, y index=3] {data/mars_comptime_diffgs_cap50.csv};
\addplot[only marks, mark=diamond*, mark options={}, mark size=1.5000pt, color=mycolor4, fill=mycolor2]table[col sep=comma,x index=1, y index=3] {data/mars_comptime_diffgs_cap100.csv};

\end{axis}

\begin{axis}[%
width=\plotsize,
height=\plotsize,
at={(4cm,0)},
scale only axis,
restrict x to domain=0:25,
xmin=0,
xmax=22,
xlabel style={font=\color{white!15!black}},
xlabel={grid size},
ymin=0,
axis background/.style={fill=white},
title style={font=\bfseries},
title={(b) explicit},
legend style={legend cell align=left, align=left, draw=white!15!black}
]
\addplot[only marks, mark=x, mark options={}, mark size=1.5000pt, color=mycolor2, fill=mycolor2]table[col sep=comma,x index=1, y index=3] {data/mars_comptime_explicit.csv};

\end{axis}

\begin{axis}[%
width=\plotsize,
height=\plotsize,
at={(8cm,0)},
scale only axis,
restrict x to domain=0:25,
xmin=0,
xmax=22,
xlabel style={font=\color{white!15!black}},
xlabel={grid size},
ymin=0,
axis background/.style={fill=white},
title style={font=\bfseries},
title={(c) combined},
legend style={legend cell align=left, align=left, draw=white!15!black, font=\scriptsize},
legend columns=1, legend pos=north west,
legend entries={CMDP, MEC-decomp.}
]
\addplot[only marks, mark=*, mark options={}, mark size=1.2000pt, color=mycolor1, fill=mycolor1] table[col sep=comma,x index=1, y index=3] {data/mars_comptime_diffgs_cap10.csv};

\addplot[only marks, mark=x, mark options={}, mark size=1.5000pt, color=mycolor2, fill=mycolor2]table[col sep=comma,x index=1, y index=3] {data/mars_comptime_explicit.csv};

\end{axis}

%

\pgfplotsset{scaled y ticks=false}
\end{tikzpicture}
	\caption{Mean computation times for varying grid sizes and of size capacities: \textbf{(a) CMDP} -- computating Büchi
		objective via CMDP, \textbf{(b) explicit} -- computating MEC
		decomposition of the explicit MDP for a capacity of 10,  \textbf{(c) combined} -- combined computation time for a capacity of 10.}
	\label{fig:mars-plots}
\end{figure}
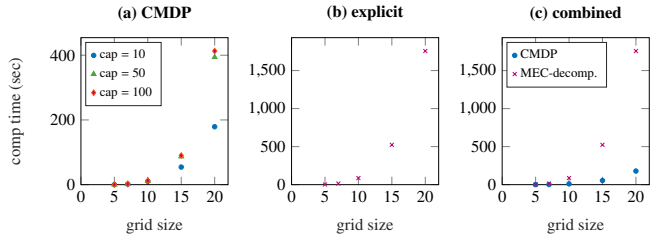
\vspace{-0.3cm}
\section{Conclusion \& Future Work}
We presented a first study of consumption Markov decision processes (CMDPs) with qualitative $ \omega $-regular objectives. We developed and implemented a polynomial-time algorithm for CMDPs with an objective of probability-1 satisfaction of a given B\"uchi condition. Possible directions for the future work are extensions to quantitative analysis (e.g. minimizing the expected resource consumption), stochastic games, or partially observable setting.

\noindent\textbf{Acknowledgements:} We acknowledge the kind help of Vojtěch Forejt, David Klaška, and Martin Kučera in the discussions leading to this paper.
\bibliographystyle{abbrv}
\bibliography{petr,pranay}
\newpage
\appendix
\begin{center}
\Large Technical Appendix
\end{center}
\section{Proofs}
\subsection{Proof of Theorem~\ref{thm:minreach-main}}
\label{app:minreachability}

\begin{lemma}
 \label{lem:reachopt} There exists a memory-less optimal strategy for
 the objective $\minreach\target$.
 \end{lemma}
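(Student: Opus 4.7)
The plan is to exhibit a memoryless strategy derived from the Bellman fixpoint of $\mathcal{F}$ and verify its optimality by a telescoping argument, using the decreasing assumption on $\mdp$ to rule out non-terminating plays. I would first invoke \Cref{thm:minreach-main} to obtain the fixpoint $\vec{v}:=\minreach{\target}$, which satisfies $\vec{v}(s)=0$ for $s\in \target$ and $\vec{v}(s)=\min_{a\in \actions}\bigl(\cons(s,a)+\max_{t\in\Succ(s,a)}\vec{v}(t)\bigr)$ for $s\notin \target$. I would then define a memoryless strategy $\sigma^*$ by choosing, in every state $s\notin \target$ with $\vec{v}(s)<\infty$, some action $a^*_s$ realizing this minimum; in targets and in states with $\vec{v}(s)=\infty$ the choice is immaterial, the latter because the sought inequality holds trivially there.

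The second step is to show that every $\sigma^*$-compatible run from $s$ with $\vec{v}(s)<\infty$ reaches $\target$ with total consumption at most $\vec{v}(s)$. For the cost bound, the idea is to telescope: along any $\sigma^*$-compatible prefix $s_0 a_0 s_1 a_1 \ldots s_m$, the Bellman equality at $s_i\notin \target$ combined with $a_i=\sigma^*(s_i)$ yields $\vec{v}(s_{i+1})\leq \vec{v}(s_i)-\cons(s_i,a_i)$, so accumulated cost plus $\vec{v}(s_m)$ never exceeds $\vec{v}(s_0)$. Upon the first visit to $\target$, $\vec{v}$ drops to $0$ and the bound follows. Taking the supremum over runs then gives $\reachcons{\target}(\sigma^*,s)\leq \vec{v}(s)=\minreach{\target}(s)$, while the reverse inequality is immediate from the definition of $\minreach{\target}$.

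The main subtlety will be arguing that every $\sigma^*$-compatible run actually reaches $\target$, for which I would invoke the standing assumption that $\mdp$ is decreasing. If some run avoided $\target$ forever, finiteness of $\states$ would force a revisit $s_i=s_j$ for some $i<j$, and the infix $s_i a_i\ldots s_j$ would contain a simple cycle of strictly positive total consumption by the decreasing assumption. The telescoping inequality above then forces $\vec{v}(s_j)<\vec{v}(s_i)$, contradicting $s_i=s_j$. Without the decreasing hypothesis a naive Bellman-optimal choice could loop forever over a zero-cost transition competing with a positive-cost transition to $\target$, and one would instead need to break ties more carefully, for instance by preferring actions that minimize the iteration index at which $\vec{v}(s)$ is first attained in the $\mathcal{F}$-iteration of \Cref{thm:minreach-main}.
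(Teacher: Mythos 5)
Your argument is correct under the paper's standing assumption that the CMDP is decreasing, but it takes a genuinely different route from the paper's proof. The paper does not use decreasingness here at all: it fixes a \emph{good} action in every state (one with $\cons(s,a)+\max_{s'\in\Succ(s,a)}\minreach{\target}(s')\leq\minreach{\target}(s)$, exactly your Bellman-optimal choice) and then resolves the "which good action" question by ranking states --- rank $0$ for $\target$, rank $i+1$ for states having a good action all of whose successors are already ranked --- and always playing a good \emph{progressing} action, so that every compatible run strictly decreases the rank and must hit $\target$. Your proof instead rules out $\target$-avoiding runs by finding a repeated state, extracting a simple cycle of positive consumption via the decreasing assumption, and deriving $\vec{v}(s_j)<\vec{v}(s_i)$ from the telescoping inequality; that contradiction is sound (all values along the run stay finite and $\vec{v}$ is non-increasing along it). What the paper's approach buys is generality: the lemma and \Cref{thm:minreach-main} then hold for arbitrary CMDPs, and the paper explicitly asserts that decreasingness is used only in \Cref{thm:safety-main} and \Cref{thm:buchi}; your proof would silently add a third dependency. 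Your closing remark --- breaking ties by preferring actions that attain $\vec{v}(s)$ earliest in the $\mathcal{F}$-iteration --- is essentially the paper's ranking in disguise, so you have correctly identified the repair needed to recover the general statement.

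One presentational caveat: you should not invoke \Cref{thm:minreach-main} wholesale, because the paper proves that theorem \emph{using} this lemma (its proof fixes "the memoryless optimal strategy from \Cref{lem:reachopt}"), so the dependency would be circular. All you actually need is that $\minreach{\target}$ is a fixed point of $\mathcal{F}$, which the paper establishes by a direct one-step ("easy computation") argument independent of the lemma; cite that fact alone and the circularity disappears.
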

 \begin{proof}
 It is clear that in every state $s$, the player must play a
 \emph{good} action, i.e. an action $a$ such that $\cons(s,a) +
 \max_{s' \in \Succ(s,a) }\minreach\target(s') \leq\minreach\target(s)
 $. If there are multiple good actions in some state $s$ we proceed as
 follows.
 
 We first assign a \emph{ranking} to states. All states in $\target$ have
 rank $0$. Now assume that we have assigned ranks $\leq i$ and we want
 to assign rank $i+1$ to some of the yet unranked states. We say that
 an action $a$ is \emph{progressing} in an unranked state $s$ if all
 $s'\in \Succ(s,a)$ have a rank (smaller than $i+1$). If all good
 actions in the yet unranked states are non-progressing, than all the
 unranked states $s$ have $\minreach\target(s)=\infty$, since by
 playing any good action, the player cannot force reaching a ranked
 state and thus also a target state; hence, in this case we assign all
 the unranked states the rank $\infty$ and finish the construction. 
 Otherwise, we assign rank $i+1$ to all unranked states that have a
 good progressing action and continue with the construction. It is easy
 to see that a state $s$ is assigned a finite rank if and only if
 $\minreach\target(s)<\infty$.
 
 We now fix a memory-less strategy $\sigma$ such that in a state of
 finite rank, $\sigma$ chooses a good progressing action; for states of
 an infinite rank, $\sigma$ chooses an arbitrary (but fixed) action.
 Since $\sigma$ only uses good actions, a straightforward induction
 shows that for each state $s$ and each $s$-initiated run that reaches
 $\target$ actually reaches $\target$ with consumption at most
 $\minreach\target(s)$. So we need to show that $\sigma$ does not admit
 runs initiated in a state of finite rank that never reach $\target$. But all
 $\sigma$-compatible runs initiated in a state of finite rank decrease
 the rank in every step, since $\sigma$ only plays progressing actions.
 The result follows.
 \qed
 \end{proof}

Given a target set $\target$, a number $i \in \Nset$, and a run
$\run$, we define $\reachcons{\target}^i(\run)$ as
$\reachcons{\target}(\run)$ with the additional restriction that $f
\leq i$. Intuitively, $\reachcons{\target}^i(\run)=\infty$ if $\run$
does not visit $\target$ withing $i-1$ steps. We then put
\[\minreach{\target}^i(s) = \inf_{\sigma}\sup_{\run \in
\compatible{\sigma}{s}} \reachcons{\target}^i(\run).\]

\begin{lemma}
\label{lem:stepreach-cost} For every $i\geq 0$ it holds that
$\minreach{\target}^i(s) = \veccomp{\mathcal{F}^i(\vec{x}_T)}{s} $.
\end{lemma}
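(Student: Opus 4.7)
The plan is to proceed by induction on $i$. For the base case $i=0$, both $\vec{x}_T$ and $\minreach{\target}^0$ equal $0$ on $\target$ and $\infty$ elsewhere, which follows directly from the definition of $\reachcons{\target}^i$: a run starting in $\target$ has target-visit index $1$ and hence zero consumption to $\target$, while a run starting outside $\target$ cannot satisfy the index bound demanded by $\reachcons{\target}^0$, so the sup over compatible runs (and hence the inf over strategies) is $\infty$.

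For the inductive step, assume $\minreach{\target}^i = \mathcal{F}^i(\vec{x}_T)$ and fix a state $s$. The case $s \in \target$ is immediate, since $\veccomp{\mathcal{F}(\vec{v})}{s} = 0$ for any $\vec{v}$ and $\minreach{\target}^{i+1}(s) = 0$ as well. For $s \notin \target$, I will establish the two inequalities separately.

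For the $\leq$ direction, I will fix an arbitrary $a \in \actions$ and build a strategy $\sigma_a$ that plays $a$ at $s$ and, after observing the sampled successor $t \in \Succ(s,a)$, switches to a strategy achieving $\minreach{\target}^i(t)$ from $t$. Because $\reachcons{\target}^i$ depends only on prefixes of bounded length drawn from a finite set of actions and states, the infimum defining $\minreach{\target}^i(t)$ is actually attained whenever it is finite; the case $\minreach{\target}^i(t) = \infty$ is vacuous. Every $\sigma_a$-compatible $s$-initiated run then satisfies $\reachcons{\target}^{i+1} \leq \cons(s,a) + \max_{t \in \Succ(s,a)} \minreach{\target}^i(t)$; minimizing over $a$ and invoking the inductive hypothesis yields $\minreach{\target}^{i+1}(s) \leq \veccomp{\mathcal{F}^{i+1}(\vec{x}_T)}{s}$. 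For the $\geq$ direction, I will fix an arbitrary strategy $\sigma$ from $s$, let $a = \sigma(s)$ be its first action, and for each $t \in \Succ(s,a)$ form the tail strategy $\sigma_t(\histpr) = \sigma(sat \histconc \histpr)$. By definition of $\minreach{\target}^i(t)$, there is a $\sigma_t$-compatible $t$-initiated run $\run_t$ with $\reachcons{\target}^i(\run_t) \geq \minreach{\target}^i(t)$. Prepending $sat$ to $\run_{t^{*}}$ for the successor $t^{*}$ maximizing $\minreach{\target}^i$ produces a $\sigma$-compatible $s$-initiated run whose $\reachcons{\target}^{i+1}$ is at least $\cons(s,a) + \max_{t \in \Succ(s,a)} \minreach{\target}^i(t) \geq \mathcal{F}(\minreach{\target}^i)(s)$. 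Taking the infimum over $\sigma$ closes the induction.

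The main obstacle I expect is the $\leq$ direction, where I must splice together (possibly distinct) optimal strategies for $\minreach{\target}^i$ across all successors of the first action and argue that the composed strategy meets the stated worst-case bound. The key saving observation is that, over a bounded horizon in a finite MDP, the infimum defining $\minreach{\target}^i$ is realized by a concrete strategy, so no $\varepsilon$-optimality or limiting argument is needed. The $\infty$-valued corner cases are absorbed uniformly by the convention $\infty + x = \infty$, which matches the evaluation of the $\min$-$\max$ on the right-hand side.
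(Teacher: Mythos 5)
Your proof is correct and takes essentially the same route as the paper's: induction on $i$ with the case split on $s\in\target$, followed by the one-step Bellman decomposition $\minreach{\target}^{i+1}(s)=\min_{a\in\actions}\big(\cons(s,a)+\max_{t\in\Succ(s,a)}\minreach{\target}^{i}(t)\big)$. The paper asserts this exchange of $\inf_\sigma\sup_\run$ with $\min_a/\max_t$ in a single displayed equality, whereas you justify it via the two inequalities (strategy composition for $\leq$, tail-strategy decomposition for $\geq$) together with the observation that the bounded-horizon infima and suprema are attained in a finite MDP — a legitimate elaboration of the same argument, not a different one.
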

\begin{proof}
By an induction on $i$. The base case is simple. Now assume that the equality holds for some $i\geq 0$. For $i+1$ we get:
\begin{align*}
\minreach{\target}^{i+1}(s) &= 
\inf_{\sigma}\sup_{\run \in \compatible{\sigma}{s}} 
  \reachcons{\target}^{i+1}(\varrho) \\
&= \min_{a\in\actions} \Big( \cons(s,a) + \max_{t \in \Succ(s,a)}  \inf_{\sigma}\sup_{\run \in \compatible{\sigma}{t}} \reachcons{\target}^{i}(\varrho)\Big)\\
&\stackrel{\mathclap{\tiny\mbox{(I.H.)}}}{=} \min_{a\in\actions} \Big( \cons(s,a) + \max_{t \in \Succ(s,a)} \veccomp{\mathcal{F}^i(\vec{x}_\target)}{t} \Big) = \veccomp{\mathcal{F}^{i+1}(\vec{x}_\target)}{s}
\end{align*}
\qed
\end{proof}

\begin{reftheorem}{thm:minreach-main}
Denote by $n$ the length of the longest
simple path in $\mdp$. Let $\vec{x}_T$ be a vector such that
$\veccomp{\vec{x}_T}{s}=0$ if $s\in \target$ and
$\veccomp{\vec{x}_T}{s}=\infty$ otherwise. Then iterating
$\mathcal{F}$ on $\vec{x}_T$ yields a fixpoint in at most $n$ steps
and this fixpoint equals $\minreach{\target}$.
\end{reftheorem}

\begin{proof}
It is easy to see that for each $s$ and each $i\geq 0$ it holds that
$\minreach\target^i(s) \geq \minreach\target(s)$. Furthermore, an easy
computation shows that $\minreach\target(s)$ is a fixed point of
$\mathcal{F}$.  Hence, it suffices to show
that $\minreach\target^{|\states|}(s) = \minreach\target(s)$. Assume,
for the sake of contradiction, that we have some $s\in \states$ such
that $\minreach\target^{|\states|} (s) > \minreach\target(s)$. Fix
$\sigma$ to be the memory-less optimal strategy from
\Cref{lem:reachopt}. By the definition of $
\minreach\target^{|\states|} (s) $ we have a run $\run \in
\compatible{\sigma}{s}$ such that $
\reachcons\target^{|\states|}(\run) > \minreach\target(s) $. This is
only possible if $\reachcons\target^{|\states|}(\run) = \infty$,
otherwise this would contradict the optimality of $\sigma$ (note that
if $\reachcons\target^i(\run)<\infty$, then $
\reachcons\target^j(\run) = \reachcons\target^i(\run)$ for all $ j
\geq i $). Hence, there is a run compatible with $\sigma$ whose prefix
of length $|\states|$ does not contain a target state. But this prefix
must contain a cycle, and since $\sigma$ is memory-less, we can
iterate this cycle forever. Hence, the optimal strategy $\sigma$
admits a run from $s$ that never reaches $T$, a contradiction with
$\minreach\target(s) < \minreach\target^{|\states|}(s) \leq \infty$.%
\qed
\end{proof}

\subsection{Proof of~\Cref{lem:minitconst-reach-to-direct}}

By induction on $ i $. The base case is clear. Now assume that the
equality holds for some $ i \geq 0 $. Note that for any
$(s,a)\in\states \times \actions$ we have $\Succ(\tilde{s},a) =
\Succ(s,a)\subseteq \states$. Also, for all $s' \in \states \setminus
\reloads$ we have $\veccomp{\strunc{\mathcal{G}^i(\infvec)}{}}{s'} =
\veccomp{{\mathcal{G}^i(\infvec)}}{s'} = \veccomp{\mathcal{F}^i(\vec{x}_\reloads)}{s'}$
(by induction hypothesis), while for  $s' \in \states \cap \reloads$
we have $ \veccomp{\strunc{\mathcal{G}^i(\infvec)}{}}{s'} = 0 =
\veccomp{\mathcal{F}^i(\vec{x}_\reloads)}{s'}$ (by the definition of $
\mathcal{F} $). Hence, $\max_{s'\in
  \Succ(s,a)}\veccomp{\strunc{\mathcal{G}^i(\infvec)}{}}{s'} = \max_{s'\in
  \Succ(\tilde{s},a)}{\veccomp{\mathcal{F}^i(\vec{x}_\reloads)}{s'}}$. 
Thus,
\begin{align*}
\veccomp{\mathcal{G}^{i+1}(\infvec)}{s} &=  \min_{a\in \actions} \left(\cons(s,a) + \max_{s'\in \Succ(s,a)} \veccomp{\strunc{\mathcal{G}^i(\infvec) }{}}{s'}\right) \\
& = \min_{a\in \actions} \left(\cons(s,a) + \max_{s'\in \Succ(\tilde{s},a)}{\veccomp{\mathcal{F}^i(\vec{x}_\reloads)}{s'}}\right)\\
& = \veccomp{\mathcal{F}^{i+1}(\vec{x}_\reloads)}{\tilde{s}}.
\end{align*}
(The last equation following from the definition of $ \mathcal{F} $ and from the fact that $\tilde{s}$ is never a reload state.)

Moreover, from \Cref{lem:stepreach-cost} it follows that $
\veccomp{\mathcal{F}^{i+1}(\vec{x}_\reloads)}{s} =
\minreach{\reloads}^{i+1}(s) $. But if $s$ is not a reload state, then
$ \minreach{\reloads}^{i+1}(s)  = \minreach\reloads^{i+1}(\tilde{s})$,
so for $ s\in \states\setminus\reloads $ we have $
\veccomp{\mathcal{G}^{i+1}(\infvec)}{s} =
\veccomp{\mathcal{F}^{i+1}(\vec{x}_\reloads)}{\tilde{s}} =
\veccomp{\mathcal{F}^{i+1}(\vec{x}_\reloads)}{s} $, which proves the
second part. \qed%

\subsection{Completion of the proof of~\Cref{thm:safety-main}}
\label{app:safetymain}

Now we prove that upon termination, $ \vec{out} \geq \safe_{\mdp} $.
For every $ t\in \states $ there exists, by the definition of
$\MinInitCons$, a strategy $\sigma_t$ s.t. $\reachcons{\mdp(\relvar),
\relvar}^+(\sigma_t, t) = \veccomp{\vec{mic}}{t} =
\veccomp{\vec{out}}{t}$ and thus $\reachcons{\mdp(\relvar),
\relvar}^+(\run)$ is bounded by $\veccomp{\vec{out}}{t}$ for each
$\run\in\compatible{\sigma_t}{t}$. We construct a new strategy $\pi$
that, starting in some state $t$ initially mimics $\sigma_t$ until the
next visit of a state $r_1\in \relvar$. Once this happens, the
strategy $\pi$ begins to mimic $\sigma_{r_1}$ until the next visit of
some $r_2 \in \relvar$, when $\pi$ begins to mimic $\sigma_{r_2}$, and
so on \emph{ad infinitum.}

Fix any state $s$ such that upon termination, $\veccomp{\vec{mic}}{s}
\leq \Ca$ (for other states, the inequality $\veccomp{\vec{out}}{s}
\geq \safe_\mdp(s)$ clearly holds). We prove that every run
$\run\in\compatible[\mdp(\relvar)]{\pi}{s}$ is actually $
\veccomp{\vec{mic}}{s} $-safe from $ s $ in the original MDP $ \mdp $.
In fact, it is sufficient to show this in $ \mdp(\relvar) $, since
each its reload state is also a reload state of $\mdp$.

Let $i_1 < i_2 < \ldots$ be all indices $i$ such that
$\rstate{i}\in\relvar$. Since $\pi$ mimics $\sigma_s$ until
$\rstate{i_1}$, we have that
$\enlev{\veccomp{\vec{mic}}{s}}{\run\pref{j}} = \veccomp{\vec{mic}}{s}
- \pathcons(\run\pref{j})$ where $ \pathcons(\run\pref{j}) $ is by
definition of $\sigma_s$ bounded by $\veccomp{\vec{mic}}{s}$, for all
$j \leq i_1$. Now let $m \geq 1$, we set $k=i_m$ and $l=i_{m+1}$. As
$\pi$ mimics $\sigma_{\rstate{k}}$ for between $\rstate{k}$ and
$\rstate{l}$ we have for $k < j \leq l$ that
$\enlev{\veccomp{\vec{mic}}{s}}{\run\pref{j}} =
\enlev{\Ca}{\run\infix{k}{j}} = \Ca - \pathcons(\run\infix{k}{j})$,
where $\pathcons(\run\infix{k}{j})$ is bounded by
$\reachcons{\relvar}^+(\sigma_{\rstate{k}}, \rstate{k}) =
\veccomp{\vec{mic}}{\rstate{k}} < \Ca$. Therefore,
$\enlev{\veccomp{\vec{mic}}{s}}{\run\pref{j}} \geq 0$ for all $j$ and
$\run$ is $\veccomp{\vec{mic}}{s}$-safe. This finishes the proof.
\qed

\subsection{Proof of \Cref{lem:posreach-iterate}}
\label{app:posreach-iterate}
By induction on $ i $. The base case is clear. Now assume that the
statement holds for some $ i \geq 0 $. Fix any $s$. Denote by $b =
\veccomp{\mathcal{B}^{i+1}(\vec{y}_\target)}{s}$ and $
d=\SafePosReach^{i+1}(s) $. We show that $b = d$. The equality
holds whenever $s\in \target$, so in the remainder of the proof we assume
that $s\not \in \target$.

We first prove that $b\geq d$. If $b=\infty$, this is clearly true.
Otherwise, let $a_{\min}$ be the action minimizing
$\SPRval(s,a_{\min},\mathcal{B}^{i}(\vec{y}_T)) $ (which equals $b$
if $ s\not \in \reloads $) and let $t_{\min} \in \Succ(s,a_{\min})$ be
the successor used to achieve this value. By induction hypothesis,
there exists a strategy $\sigma_1$ that is
$\veccomp{\mathcal{B}^{i}(\vec{y}_T)}{t_{\min}}$-safe in $t_{\min}$
and $\probm{\sigma_1}{t_{\min}}(\ReachRuns^i)>0$, and there also
exists a strategy $\sigma_2$ that is $\Safe(t)$-safe from $t$ for all
$t\in \Succ(s,a_{\min}), t\neq t_{\min}$.

Consider now a strategy $\pi $ which, starting in $s$, plays
$a_{\min}$. If the outcome of $a_{\min}$ is $t_{\min}$, $\pi$ starts
to mimic $\sigma_1$, otherwise it starts to mimic $\sigma_2$. We claim
that $\pi$ is $b$-safe in $s$ and that
$\probm{\pi}{s}(\ReachRuns^{i+1})>0$ by showing the following two points.
\begin{compactenum}
\item There is at least one run $\run_\target\in\compatible{\pi}{s}$ that reaches $\target$ in $\leq i+1$ steps.%
\item All runs in $\compatible{\pi}{s}$ are $b$-safe.
\end{compactenum}
\noindent We construct $\run_\target$ easily as
$\run_\target=\hist\histconc\run^\prime$ where $\hist =
sa_{\min}t_{\min}$ and $\run^\prime$ is the witness that
$\probm{\sigma_1}{t_{\min}}(\ReachRuns^i)>0$. Now let
$\run\in\compatible{\pi}{s}$ be a run produced by $\pi$ from $s$. Then
it has to be of the form $\run=sa_{\min}t\histconc\run_\sigma$ where
$\run_\sigma\in\compatible{\sigma_1}{t}$ if $t=t_{\min}$ or
$\run_\sigma\in\compatible{\sigma_2}{t}$ it $t\neq t_{\min}$; in both
cases $\veccomp{\Safe}{t}$-safe in $t$. By
definition of $\SPRval$ and by induction hypothesis, we have for
$s\notin\reloads$ that $b \geq \cons(s,a_{\min}) + \veccomp{\Safe}{t}$
and thus $\enlev{b}{sa_{\min}t}\geq \veccomp{\Safe}{t}$ and thus
$\run$ is $b$-safe by \Cref{lem:safe-path-extension}. If $s\in
\reloads$ and $b=0$, by similar arguments, as $\cons(s,a_{\min}) +
\veccomp{\Safe}{t} \leq \Ca$ (otherwise $b$ would be $\infty$), we
have that $\enlev{b}{sa_{\min}t}\geq \veccomp{\Safe}{t}$ and thus
$\run$ is $b$-safe.

Now we prove that $ b \leq d $. This clearly holds if $ d = \infty $,
so in the remainder of the proof we assume $ d\leq \Ca(\mdp) $. By the
definition of $ d $ there exists a strategy $ \sigma $ s.t. $\sigma $
is $ d $-safe in $ s $ and $\probm{\sigma}{s}(\ReachRuns^{i+1}(d))>0.$
Let $a=\sigma(a)$ be the action selected by $ \sigma $ in the first
step when starting in $s$. We denote by $ \tau $ the strategy such
that for all histories $ \fpath $ we have $ \tau(\fpath) =
\sigma(sa\fpath) $. For each $ t\in \Succ(s,a) $ we assign a number $
d_t $ defined as $d_t=0$ if $t\in\reloads$ and $d_t=\enlev{d}{sat}$
otherwise.

We finish the proof by proving these two claims:
\begin{compactenum}
\item It holds $ \SPRval(s, a, \mathcal{B}^{i}(\vec{y}_T)) \leq
\cons(s, a) + \max_{t\in \Succ(s,a)} d_t $.%
\item If $ s \not \in \reloads $, then $ \cons(s,a) + \max_{t\in
\Succ(s,a)} d_t \leq d$.
\end{compactenum}
Let us first see, why these claims are indeed sufficient. From (1.) we
get $ \veccomp{\mathcal{A}(\mathcal{B}^{i}(\vec{y}_T))}{s} \leq
\cons(s, a) + \max_{t\in \Succ(s,a)} d_t \leq \Ca(\mdp)$ (from the definition of $\enlev{d}{sat}$). If $ s\in
\reloads $, then it follows that $ 
\veccomp{\trunc{\mathcal{A}(\mathcal{B}^{i}(\vec{y}_T))}}{s} = 0 \leq
d $. If $ s\not \in \reloads $, then
$\veccomp{\trunc{\mathcal{A}(\mathcal{B}^{i}(\vec{y}_T))}}{s} = \veccomp{\mathcal{A}(\mathcal{B}^{i}(\vec{y}_T))}{s} \leq \cons(s,
 a) + \max_{t\in \Succ(s,a)} d_t \leq d$, the first inequality shown
 above and the second coming from (2.).

So let us start with proving (1.). Note that for each $ t\in
 \Succ(s,a) $, $ \tau $ is necessarily $ d_t $-safe from $ t $;
 hence, $ \safe(t) \leq d_t $. Moreover, there exists $ q \in
 \Succ(s,a) $ s.t. $ \probm{\tau}{q}(\ReachRuns^i)>0 $; hence,
 by induction hypothesis it holds $
 \veccomp{\mathcal{B}^{i}(\vec{y}_T)}{q} \leq d_q $. From this and
 from the definition of  $\SPRval$ we get
\begin{align*} 
\SPRval(s, a, \mathcal{B}^{i}(\vec{y}_T)) &
    \leq \cons(s,a) + \max \{ \veccomp{\mathcal{B}^{i}(\vec{y}_T)}{q}, 
    \safe(t) \mid t\in\Succ(s,a), t\neq q \}\\
&
    \leq \cons(s,a) + \max_{t\in \Succ(s,a)} d_t.
\end{align*}

To finish, (2.) follows immediately from the definition of $d_t$ and
$\enlev{d}{sat}$ as $\enlev{d}{sat}$ is always bounded from above by
$d-\Ca(s,a)$ for $s\notin \reloads$. \qed

\subsection{Proof of \Cref{lem:posreach-bound}}
\label{app:posreach-bound}
By \Cref{lem:posreach-iterate}, it suffices to show that $
\SafePosReach = \SafePosReach^{K}$. To this end, fix any state $ s $
such that $\SafePosReach(s) < \infty$. For the sake of succinctness,
we denote $\veccomp{\SafePosReach}{s} = d$. To any strategy $\pi$ that
is $d$-safe in $s$, we assign its \emph{index}, which is the infimum
of all $i$ such that $\probm{\pi}{s}(\ReachRuns^i) >0 $. By assumption
that $\SafePosReach(s) < \infty$, there is at least one $\pi$ with a
finite index. Let $\sigma$ be the $d$-safe in $s$ strategy with
minimal index $i$: we show that the $i \leq K$.

We proceed by a suitable ``strategy surgery'' on $\sigma$. Let
$\hist$ be a history produced by $\sigma$ from $s$ of length $i$
whose last state belongs to $T$. Assume, for the sake of
contradiction, that $i > K$. This can only be if at least one of the
following conditions hold:
\begin{compactenum}[(a)]
\item Some reload state is visited twice on $\hist$, i.e. there are $0
\leq j < k \leq i + 1$ such that $ \rstate[\hist]{k} =
\rstate[\hist]{j} \in \reloads $, or%
\item some state is visited twice with no intermediate visits to a
reload state; i.e., there are $ 0 \leq j < k \leq i + 1$ such that $
\rstate[\hist]{k} = \rstate[\hist]{j} $ and $ \rstate[\hist]{\ell}
\not\in \reloads $ for all $ j <\ell <k $.
\end{compactenum}
Indeed, if none of the conditions hold, then the reload states
partition $ \fpath $ into at most $ |\reloads|+1 $ segments, each
segment containing non-reload states without repetition. This would
imply $ i = \len{\hist} \leq K$.

In both cases (a) and (b) we can arrive at a contradiction using
essentially the same argument. Let us illustrate the details on case
(a): Consider a strategy $ \pi $ such that for every history of the
form $\hist\pref{j} \histconc\gamma $ for a suitable $ \gamma $ we
have $\pi(\hist\pref{j} \histconc\gamma) =
\sigma(\hist\pref{k}\histconc \gamma)$; on all other histories, $\pi$
mimics $\sigma$. Then $\pi$ is still $d$-safe in $s$. Indeed, the
behavior only changed on the suffixes of $\hist\pref{j}$, and for each
suitable $\gamma$ we have $\enlev{d}{\hist\pref{j}\histconc\gamma} =
\enlev{d}{\hist\pref{j}\histconc\gamma}$ due to the fact that
$\rstate[\hist]{j}=\rstate[\hist]{k}$ is a reload state. But $\pi$ has
index $i - (k-j)< i $, a contradiction with the choice of $\sigma$.

For case (b), the only difference is that now the resource level after
$\hist\pref{j}\histconc\gamma$ can be higher then the one of
$\hist\pref{k}\histconc\gamma$ due to the removal of the intermediate
non-reloading cycle. Since we need to show that the energy level never
drops below 0, the same argument works.%
\qed

\end{document}